\documentclass[a4paper,11pt]{article}
\usepackage[T1]{fontenc} 

\NeedsTeXFormat{LaTeX2e}
\makeatletter
\gdef\@fpheader{ }
\gdef\@journal{ }
\makeatother
\RequirePackage{amsmath}
\RequirePackage{amssymb}
\RequirePackage{epsfig}
\RequirePackage{graphicx}
\RequirePackage[numbers,sort&compress]{natbib}
\RequirePackage{color}
\RequirePackage[colorlinks=true
,urlcolor=blue
,anchorcolor=blue
,citecolor=blue
,filecolor=blue
,linkcolor=blue
,menucolor=blue
,pagecolor=blue
,linktocpage=true
,pdfproducer=medialab
,pdfa=true
]{hyperref}

\newif\ifnotoc\notocfalse
\newif\ifemailadd\emailaddfalse
\newif\iftoccontinuous\toccontinuousfalse
\makeatletter
\def\@subheader{\@empty}
\def\@keywords{\@empty}
\def\@abstract{\@empty}
\def\@xtum{\@empty}
\def\@dedicated{\@empty}
\def\@arxivnumber{\@empty}
\def\@collaboration{\@empty}
\def\@collaborationImg{\@empty}
\def\@proceeding{\@empty}
\def\@preprint{\@empty}

\newcommand{\subheader}[1]{\gdef\@subheader{#1}}
\newcommand{\keywords}[1]{\if!\@keywords!\gdef\@keywords{#1}\else%
\PackageWarningNoLine{\jname}{Keywords already defined.\MessageBreak Ignoring last definition.}\fi}
\renewcommand{\abstract}[1]{\gdef\@abstract{#1}}
\newcommand{\dedicated}[1]{\gdef\@dedicated{#1}}
\newcommand{\arxivnumber}[1]{\gdef\@arxivnumber{#1}}
\newcommand{\proceeding}[1]{\gdef\@proceeding{#1}}
\newcommand{\xtumfont}[1]{\textsc{#1}}
\newcommand{\correctionref}[3]{\gdef\@xtum{\xtumfont{#1} \href{#2}{#3}}}
\newcommand\jname{JHEP}
\newcommand\acknowledgments{\section*{Acknowledgments}}

\newcommand\preprint[1]{\gdef\@preprint{\hfill #1}}

\newtheorem{theorem}{Theorem}

\newenvironment{proof}[1][Proof]{\noindent\textbf{#1.} }{\ \rule{0.5em}{0.5em}}

\makeatother

\newcommand\note[2][]{%
\if!#1!%
\stepcounter{footnote}\footnotetext{#2}%
\else%
{\renewcommand\thefootnote{#1}%
\footnotetext{#2}}%
\fi}


\makeatletter
\newtoks\auth@toks
\renewcommand{\author}[2][]{%
  \if!#1!%
    \auth@toks=\expandafter{\the\auth@toks#2\ }%
  \else
    \auth@toks=\expandafter{\the\auth@toks#2$^{#1}$\ }%
  \fi
}
\makeatother
\makeatletter
\newtoks\affil@toks\newif\ifaffil\affilfalse
\newcommand{\affiliation}[2][]{%
\affiltrue
  \if!#1!%
    \affil@toks=\expandafter{\the\affil@toks{\item[]#2}}%
  \else
    \affil@toks=\expandafter{\the\affil@toks{\item[$^{#1}$]#2}}%
  \fi
}
\makeatother
\makeatletter
\newtoks\email@toks\newcounter{email@counter}%
\setcounter{email@counter}{0}%
\newcommand{\emailAdd}[1]{%
\emailaddtrue%
\ifnum\theemail@counter>0\email@toks=\expandafter{\the\email@toks, \@email{#1}}%
\else\email@toks=\expandafter{\the\email@toks\@email{#1}}%
\fi\stepcounter{email@counter}}
\newcommand{\@email}[1]{\href{mailto:#1}{\tt #1}}
\makeatother

\makeatletter
\newcommand*\collaboration[1]{\gdef\@collaboration{#1}}
\newcommand*\collaborationImg[2][]{\gdef\@collaborationImg{#2}}
\makeatletter
\newcommand\afterLogoSpace{\smallskip}
\newcommand\afterSubheaderSpace{\vskip3pt plus 2pt minus 1pt}
\newcommand\afterProceedingsSpace{\vskip21pt plus0.4fil minus15pt}
\newcommand\afterTitleSpace{\vskip23pt plus0.06fil minus13pt}
\newcommand\afterRuleSpace{\vskip23pt plus0.06fil minus13pt}
\newcommand\afterCollaborationSpace{\vskip3pt plus 2pt minus 1pt}
\newcommand\afterCollaborationImgSpace{\vskip3pt plus 2pt minus 1pt}
\newcommand\afterAuthorSpace{\vskip5pt plus4pt minus4pt}
\newcommand\afterAffiliationSpace{\vskip3pt plus3pt}
\newcommand\afterEmailSpace{\vskip16pt plus9pt minus10pt\filbreak}
\newcommand\afterXtumSpace{\par\bigskip}
\newcommand\afterAbstractSpace{\vskip16pt plus9pt minus13pt}
\newcommand\afterKeywordsSpace{\vskip16pt plus9pt minus13pt}
\newcommand\afterArxivSpace{\vskip3pt plus0.01fil minus10pt}
\newcommand\afterDedicatedSpace{\vskip0pt plus0.01fil}
\newcommand\afterTocSpace{\bigskip\medskip}
\newcommand\afterTocRuleSpace{\bigskip\bigskip}
\newlength{\affiliationsSep}\setlength{\affiliationsSep}{-3pt}
\newcommand\beforetochook{\pagestyle{myplain}\pagenumbering{roman}}

\DeclareFixedFont\trfont{OT1}{phv}{b}{sc}{11}

\renewcommand\maketitle{
\pagestyle{empty}
\thispagestyle{titlepage}
\setcounter{page}{0}
\noindent{\small\scshape\@fpheader}\@preprint\par

\afterLogoSpace
\if!\@subheader!\else\noindent{\trfont{\@subheader}}\fi
\afterSubheaderSpace
\if!\@proceeding!\else\noindent{\sc\@proceeding}\fi
\afterProceedingsSpace
{\LARGE\flushleft\sffamily\bfseries\@title\par}
\afterTitleSpace
\hrule height 1.5\p@%
\afterRuleSpace
\if!\@collaboration!\else
{\Large\bfseries\sffamily\raggedright\@collaboration}\par
\afterCollaborationSpace
\fi
\if!\@collaborationImg!\else
{\normalsize\bfseries\sffamily\raggedright\@collaborationImg}\par
\afterCollaborationImgSpace
\fi
{\bfseries\raggedright\sffamily\the\auth@toks\par}
\afterAuthorSpace
\ifaffil\begin{list}{}{%
\setlength{\leftmargin}{0.28cm}%
\setlength{\labelsep}{0pt}%
\setlength{\itemsep}{\affiliationsSep}%
\setlength{\topsep}{-\parskip}}
\itshape\small%
\the\affil@toks
\end{list}\fi
\afterAffiliationSpace
\ifemailadd 
\noindent\hspace{0.28cm}\begin{minipage}[l]{.9\textwidth}
\begin{flushleft}
\textit{E-mail:} \the\email@toks
\end{flushleft}
\end{minipage}
\else 
\PackageWarningNoLine{\jname}{E-mails are missing.\MessageBreak Plese use \protect\emailAdd\space macro to provide e-mails.}
\fi
\afterEmailSpace
\if!\@xtum!\else\noindent{\@xtum}\afterXtumSpace\fi
\if!\@abstract!\else\noindent{\renewcommand\baselinestretch{.9}\textsc{Abstract:}}\ \@abstract\afterAbstractSpace\fi
\if!\@keywords!\else\noindent{\textsc{Keywords:}} \@keywords\afterKeywordsSpace\fi
\if!\@arxivnumber!\else\noindent{\textsc{ArXiv ePrint:}} \href{http://arxiv.org/abs/\@arxivnumber}{\@arxivnumber}\afterArxivSpace\fi
\if!\@dedicated!\else\vbox{\small\it\raggedleft\@dedicated}\afterDedicatedSpace\fi
\ifnotoc\else
\iftoccontinuous\else\newpage\fi
\beforetochook\hrule
\tableofcontents
\afterTocSpace
\hrule
\afterTocRuleSpace
\fi
\setcounter{footnote}{0}
\pagestyle{myplain}\pagenumbering{arabic}
} 

\renewcommand{\baselinestretch}{1.1}\normalsize
\setlength\lineskip{1\p@}
\setlength\parindent{1.2\parindent}
\setlength\normallineskip{1\p@}
\setlength\parskip{0\p@ \@plus \p@}
\@lowpenalty   51
\@medpenalty  151
\@highpenalty 301
\widowpenalty 1000
\clubpenalty 1000

\setcounter{topnumber}{4}

\setcounter{bottomnumber}{1}

\setcounter{totalnumber}{5}

\textwidth  .72\paperwidth
\setlength\@tempdima{.76\paperheight}
\divide\@tempdima\baselineskip
\@tempcnta=\@tempdima
\setlength\textheight{\@tempcnta\baselineskip}
\addtolength\textheight{\topskip}

\voffset -1in
\topmargin   .05\paperheight
\headheight  .02\paperheight
\headsep     .03\paperheight
\footskip    .07\paperheight

\marginparsep 9\p@
\marginparpush 6\p@

\hoffset -1in
\oddsidemargin .14\paperwidth
\evensidemargin .14\paperwidth
\marginparwidth .11\paperwidth

\setlength\arraycolsep{2\p@}
\setlength\tabcolsep{6\p@}
\setlength\arrayrulewidth{.4\p@}
\setlength\doublerulesep{2\p@}
\setlength\tabbingsep{\labelsep}
\skip\@mpfootins = \skip\footins
\setlength\fboxsep{3\p@}
\setlength\fboxrule{.4\p@}

\renewcommand{\@dotsep}{10000}

\newcommand\ps@myplain{
\pagenumbering{arabic}
\renewcommand\@oddfoot{\hfill-- \thepage\ --\hfill}
\renewcommand\@oddhead{}}
\let\ps@plain=\ps@myplain

\newcommand\ps@titlepage{\renewcommand\@oddfoot{}\renewcommand\@oddhead{}}


\numberwithin{equation}{section}

\renewcommand\section{\@startsection{section}{1}{\z@}%
                                   {-3.5ex \@plus -1.3ex \@minus -.7ex}%
                                   {2.3ex \@plus.4ex \@minus .4ex}%
                                   {\normalfont\large\bfseries}}
\renewcommand\subsection{\@startsection{subsection}{2}{\z@}%
                                   {-2.3ex\@plus -1ex \@minus -.5ex}%
                                   {1.2ex \@plus .3ex \@minus .3ex}%
                                   {\normalfont\normalsize\bfseries}}
\renewcommand\subsubsection{\@startsection{subsubsection}{3}{\z@}%
                                   {-2.3ex\@plus -1ex \@minus -.5ex}%
                                   {1ex \@plus .2ex \@minus .2ex}%
                                   {\normalfont\normalsize\bfseries}}
\renewcommand\paragraph{\@startsection{paragraph}{4}{\z@}%
                                   {1.75ex \@plus1ex \@minus.2ex}%
                                   {-1em}%
                                   {\normalfont\normalsize\bfseries}}
\renewcommand\subparagraph{\@startsection{subparagraph}{5}{\parindent}%
                                   {1.75ex \@plus1ex \@minus .2ex}%
                                   {-1em}%
                                   {\normalfont\normalsize\bfseries}}

\def\fnum@figure{\textbf{\figurename\nobreakspace\thefigure}}
\def\fnum@table{\textbf{\tablename\nobreakspace\thetable}}

\long\def\@makecaption#1#2{%
  \vskip\abovecaptionskip
  \sbox\@tempboxa{\small #1. #2}%
  \ifdim \wd\@tempboxa >\hsize
    \small #1. #2\par
  \else
    \global \@minipagefalse
    \hb@xt@\hsize{\hfil\box\@tempboxa\hfil}%
  \fi
  \vskip\belowcaptionskip}


\renewenvironment{thebibliography}[1]{%
\begin{oldthebibliography}{#1}%
\small%
\raggedright%
\setlength{\itemsep}{5pt plus 0.2ex minus 0.05ex}%
}%
{%
\end{oldthebibliography}%
}

\usepackage{titlesec} 
\usepackage{titletoc} 


\begin{document}



\renewcommand{\thefootnote}{\fnsymbol{footnote}}

\title{\boldmath A statistical mechanical approach to restricted integer partition functions}%


\author[a]{Chi-Chun Zhou}
\author[a,*]{and Wu-Sheng Dai}\note{daiwusheng@tju.edu.cn.}


\affiliation[a]{Department of Physics, Tianjin University, Tianjin 300350, P.R. China}









\abstract{
The main aim of this paper is twofold: (1) Suggesting a statistical mechanical
approach to the calculation of the generating function of restricted integer
partition functions which count the number of partitions ----- a way of
writing an integer as a sum of other integers under certain restrictions. In
this approach, the generating function of restricted integer partition
functions is constructed from the canonical partition functions of various
quantum gases. (2) Introducing a new type of restricted integer partition
functions corresponding to general statistics which is a generalization of
Gentile statistics in statistical mechanics; many kinds of restricted integer
partition functions are special cases of this restricted integer partition
function. Moreover, with statistical mechanics as a bridge, we reveals a
mathematical fact: the generating function of restricted integer partition
function is just the symmetric function which is a class of functions being
invariant under the action of permutation groups. Using the approach, we
provide some expressions of restricted integer partition functions as examples.
}
\keywords{Restricted integer partition function; Quantum statistics; Symmetric
function; $S$-function; Generating function; Canonical partition function.}

\maketitle
\flushbottom


\section{Introduction}

The problem of integer partition functions is important in both statistical
mechanics and mathematics \cite{andrews1998theory,andrews2004integer}. The
restricted integer partition function counts the number of partitions which
are ways of writing an integer as a sum of other integers under certain restrictions.

\textit{The integer partition function.} In mathematics, a partition of an
integer $E$ is a way of writing $E$ as a sum of other integers
\cite{andrews1998theory}. The integer partition function of the integer$\ E$
counts the number of partitions of $E$. There are two kinds of integer
partition functions: the unrestricted integer partition function $P\left(
E\right)  $ and the restricted integer partition function $P\left(  E\text{%
$\vert$
restrictions}\right)  $. The unrestricted integer partition function $P\left(
E\right)  $ counts the number of all possible partitions
\cite{hardy1999ramanujan} and the restricted integer partition function
$P\left(  E\text{%
$\vert$
restrictions}\right)  $ counts the number of partitions under certain
restrictions \cite{andrews1998theory}.

\textit{The number of microstates in statistical mechanics.} In statistical
mechanics, the macrostate of a system can be specified completely in terms of
state variables such as the total energy $E$ and the total particle number
$N$, and a microstate is a way in which a macrostate can be realized
\cite{pathria2011statistical,reichl2009modern}. The number of microstates, or,
the state density, $\Omega\left(  E,N\right)  $ counts the number of
microstates of a given macrostate specified by the total energy $E$ and the
total particle number $N$ \cite{pathria2011statistical,reichl2009modern}. For
a non-interacting system, e.g., an ideal quantum gas, the microstate is a way
in which the total energy $E$ is distributed among the $N$ particles. That is,
the microstate is a representation of $E$ in terms of the sum of $N$
single-particle energies $\varepsilon_{i}$, where $\varepsilon_{i}$ is the
eigenvalue of the single-particle state \cite{pathria2011statistical}.
Therefore, the number of microstates of an ideal quantum gas $\Omega\left(
E,N\right)  $ counts the ways of representing $E$ as a sum of\textbf{ }$N$
single-particle energies $\varepsilon_{i}$. In statistical mechanics, various
kinds of quantum statistics are distinguished by the maximum occupation
number. The maximum occupation number is the maximum number that particles are
allowed to occupy a single-particle state. For Bose-Einstein statistics, there
is no restriction on the maximum occupation number, but for Fermi-Dirac
statistics, the maximum occupation number is $1$. Gentile statistics is a
generalization of Bose-Einstein and Fermi-Dirac statistics, whose maximum
occupation number is an arbitrary integer $q$
\cite{gentile1940itosservazioni,dai2004gentile,dai2004representation,maslov2017relationship,shen2007intermediate,maslov2017model,dai2009intermediate}%
. General statistics is a generalization of Gentile statistics, whose maximum
occupation number of different quantum states\ takes on different values
\cite{dai2009exactly}. In a word, the number of microstates $\Omega\left(
E,N\right)  $ of an ideal quantum gas is the number of representations of $E$
in terms of the sum of $N$ single-particle energies $\varepsilon_{i}$ with a
constraint that each $\varepsilon_{i}$ repeats no more than a given time, such
as $1$ for Fermi gases, $q$ for Gentile gases, $\infty$ for Bose gases, and so on.

Comparing the restricted integer partition function in mathematics and the
number of microstates in statistical mechanics, we can see that the number of
microstates of an ideal quantum gas is closely related to the restricted
integer partition function that counts the partition under the following
restrictions: (1) the number of elements (summands) is $N$, (2) the element
(summands) belongs to a set $\left\{  \varepsilon_{1},\varepsilon_{2}%
,\ldots\right\}  $, and (3) the element (summand) repeats no more than a given time.

In this paper, first, by resorting to the canonical partition function of
quantum ideal gases in statistical mechanics \cite{zhou2017canonical}, we
construct the generating function of restricted integer partition functions
corresponding to ideal Bose, Fermi, and Gentile gases, respectively. The
result shows that the generating functions for these restricted integer
partition functions are symmetric functions which are invariant under the
action of the permutation group and can be represented as linear combinations
of the $S$-function which is an important class of symmetric functions
\cite{littlewood1977theory,macdonald1998symmetric}. We also calculate the
exact expression of the restricted integer partition function from the
generating function as examples. Second, based on general statistics which is
a generalization of Gentile statistics \cite{dai2009exactly}, we introduce a
new type of restricted integer partition functions and show that a number of
restricted integer partition functions are special cases of such kind of
restricted integer partition functions.

The relation between restricted integer partition functions and statistical
mechanics has been discussed in Refs.
\cite{van1937statistical,auluck1946statistical,tran2004quantum,kubasiak2005fermi,srivatsan2006gentile,bogoliubov2007enumeration,prokhorov2012asymptotic,rovenchak2014enumeration,rovenchak2016statistical,maslov2017new,maslov2017topological}%
. The restricted integer partition function that counts partitions with
elements belonging to nature numbers corresponds to the number of microstates,
or, the state density, of the system consisting of linear simple-harmonic
oscillators in statistical mechanics
\cite{auluck1946statistical,tran2004quantum,kubasiak2005fermi,srivatsan2006gentile}%
, the restricted integer partition function that counts partitions with
distinct elements corresponds to ideal Fermi gases, the restricted integer
partition function that count partitions with elements repeating no more than
$\infty$ times corresponds to ideal Bose gases
\cite{auluck1946statistical,tran2004quantum,kubasiak2005fermi}, and the
restricted integer partition function that counts partitions with elements
repeating no more than $q$ times corresponds to ideal Gentile gases with
maximum number $q$ \cite{srivatsan2006gentile}, etc. The results given by
statistical mechanics are used to solve the integer partition function
problems and vise versa. For example, Bohr and Kalckar use the integer
partition function to calculate the density of energy levels in heavy nuclei
\cite{van1937statistical}, fluctuations in one- and three- dimensional traps
are discussed by resorting to the theory of restricted integer partition
functions \cite{grossman1999number}, the problem of integer partition
functions is addressed using the microcanonical approach in statistical
mechanics \cite{prokhorov2012asymptotic}, the quantum statistical approach is
used to estimate the expression of some restricted integer partition functions
\cite{auluck1946statistical,tran2004quantum,srivatsan2006gentile} and to
estimate the number of restricted plane partitions\textbf{ }%
\cite{rovenchak2014enumeration}, etc.

Some authors discuss the relation between the thermodynamics quantity of ideal
systems and the symmetric functions, which are both invariant under
permutations. For example, the canonical partition function for a
parastatistical system can be expressed as sums of $S$-functions\textbf{
}\cite{chaturvedi1996canonical}, the partition function of an ideal gas can be
represented in terms of symmetric functions, such as the elementary and the
complete symmetric polynomial \cite{balantekin2001partition}, the partition
function of the six vertex model is equal to a factorial Schur function which
is a generalization of $S$-functions \cite{mcnamara2009factorial}, and there
is a close correspondence between the partition function of ideal quantum
gases and certain symmetric polynomials
\cite{schmidt2002partition,gorin2015asymptotics}. Our previous work shows that
the canonical partition function for quantum ideal gases, such as ideal Bose,
Fermi, and Gentile gases, can be represented as linear combinations of the
$S$-function \cite{zhou2017canonical}. In Refs.
\cite{stanley1971theory,stanley1971theory1}, the author introduces the
symmetric function and the plane partition function. There are also
discussions on restricted integer partition functions
\cite{fel2016gaussian,andrews1998theory,mitrinovic1996handbook,andrews2004integer,richard1999enumerative,nathanson2000elementary}%
.

In this paper, based on the exact canonical partition function in statistical
mechanics, we construct the generating function for a series of restricted
integer partition functions. The result reveals a relation between the
restricted integer partition function and the symmetric function.

Especially, in this paper we introduce a type of restricted integer partition
functions based on general statistics in statistical mechanics, which is a
generalization of Bose-Einstein, Fermi-Dirac, and Gentile statistics. Many
restricted integer partition functions are special cases of this kind of
restricted integer partition functions. The restricted integer partition
function introduced in this paper enables us to consider a number of
restricted integer partition functions in a unified framework.

In section \ref{review}, a brief review of the concept of the integer
partition function and the symmetric function is given. In section \ref{gen},
the exact generating function of\ restricted integer partition functions
corresponding to ideal Bose, Fermi, and Gentile gases is given. In section
\ref{less}, we introduce a kind of restricted integer partition functions
corresponding to general statistics in statistical mechanics. In section
\ref{exam}, we calculate some expressions of restricted integer partition
functions from the generating functions as examples. Conclusions are given in
section \ref{con}. In the appendix, some expressions of generating functions
are given.

\section{The integer partition function and the symmetric function: a brief
review \label{review}}

In this section, we give a brief review on the mathematical concept of the
integer partition function and the symmetric function.

\textit{Partitions. }The partition of an integer $E$, denoted by $\left(
\lambda\right)  $, is a representation of $E$ in terms of other positive
integers which sum up to $E$. For example, the partitions for $3$ are $\left(
\lambda\right)  =\left(  3\right)  $, $\left(  \lambda\right)  ^{\prime
}=\left(  2,1\right)  $, and $\left(  \lambda\right)  ^{\prime\prime}=\left(
1^{3}\right)  $, where $1^{3}$ means $1$ is appearing $3$ times. The summand
in the partition is called the \textit{element} (or the \textit{part}),
denoted by $\lambda_{i}$. The number of elements is called the \textit{length}%
, denoted by $l_{\left(  \lambda\right)  }$. For example, the elements of the
partition $\left(  \lambda\right)  =\left(  2,1\right)  $ are $\lambda_{1}=2$
and $\lambda_{2}=1$, and the length is $l_{\left(  2,1\right)  }=2$. In a
partition $\left(  \lambda\right)  $, the elements are always arranged in
descending order: $\lambda_{1}\geqslant\lambda_{2}\geqslant\ldots
\geqslant\lambda_{l}>0$.

\textit{The unrestricted integer partition function. }The number of all
partitions is called the \textit{unrestricted integer partition function},
denoted by $P\left(  E\right)  $, e.g., $P\left(  3\right)  =3$. A famous
expression of $P\left(  E\right)  $ is given by Ramanujan
\cite{hardy1999ramanujan}. The generating function of $P\left(  E\right)  $,
$\sum_{E}P\left(  E\right)  z^{E}=%
{\displaystyle\prod\limits_{i}}
1/\left(  1-z^{i}\right)  $, is given in Ref. \cite[caput XVI]%
{euler1748introductio}.

\textit{Restricted integer partition functions. }If one counts the partitions
under certain restrictions, then the number of partitions is called the
\textit{restricted integer partition function}, denoted by $P\left(  E|\text{
restrictions }\right)  $ \cite{andrews1998theory}. For example, $P\left(
3|\text{ elements are even}\right)  =0$ since no partitions meet the
restriction and $P\left(  3|\text{ }l_{\left(  \lambda\right)  }=2\right)  =1$
since the corresponding partition is $\left(  \lambda\right)  =\left(
2,1\right)  $.

Many mathematicians, such as Gupta \cite{gupta1970partitions}, Erd\"{o}s
\cite{erdos1941distribution}, and Andrews
\cite{andrews1998theory,andrews2004integer} studied the problem of integer
partitions. Many studies devotes to the restricted integer partition function.
For example, the generating function of $P\left(  E|\text{ elements belong to
}\left\{  1,2,3\ldots,n\right\}  \text{ and }l_{\left(  \lambda\right)  }\leq
N\right)  $ is the Gauss polynomial \cite{fel2016gaussian,andrews1998theory}.
A recursive relation of $P\left(  E|\text{ elements belong to }\left\{
1,2,3\ldots,n\right\}  \right)  $ is considered in Ref.
\cite{andrews1998theory}. $P\left(  E|\text{ }l_{\left(  \lambda\right)
}=N\right)  $ and $P\left(  E|\text{ }l_{\left(  \lambda\right)  }=N\text{,
elements repeat no more than }1\text{ time}\right)  $ are considered in Refs.
\cite{mitrinovic1996handbook,andrews2004integer,richard1999enumerative}. An
approximate expression of $P\left(  E|\text{ elements belong to }\left\{
a_{1},a_{2},\ldots,a_{k}\right\}  \right)  $ is given in Ref.
\cite{nathanson2000elementary}.

\textit{Symmetric functions and }$S$\textit{-functions. }The symmetric
function $f\left(  x_{1},x_{2},\ldots,x_{n}\right)  $\textit{ }is invariant
under the action of the permutation group $S_{n}$; that is, for $\sigma\in
S_{n}$, $\sigma f\left(  x_{1},x_{2},\ldots,x_{n}\right)  =f\left(
x_{\sigma\left(  1\right)  },x_{\sigma\left(  2\right)  },\ldots
,x_{\sigma\left(  n\right)  }\right)  =f\left(  x_{1},x_{2},\ldots
,x_{n}\right)  $. The $S$-function $\left(  \lambda\right)  \left(
x_{1},x_{2},\ldots\right)  $, also known as the Schur function, is an
important kind of the symmetric function, because it forms the base of the
symmetric function space, i.e., a symmetric function can be expanded in a
linear combination of the $S$-function. The $S$-function is closely related to
the representation theory of permutation group and the unitary group
\cite{littlewood1977theory,macdonald1998symmetric}. There are also some other
symmetric functions, e.g., the elementary symmetric polynomial, the complete
symmetric polynomial, etc. \cite{macdonald1998symmetric}.

The $S$-function is defined as
\cite{littlewood1977theory,macdonald1998symmetric}
\begin{equation}
\left(  \lambda\right)  \left(  x_{1},x_{2},\ldots\right)  =\sum_{\left(
\lambda\right)  ^{\prime}}\frac{g^{\left(  \lambda\right)  }}{N!}\chi_{\left(
\lambda\right)  ^{\prime}}^{\left(  \lambda\right)  }%
{\displaystyle\prod\limits_{m=1}^{k}}
\left(  \sum_{i}x_{i}^{m}\right)  ^{a_{\left(  \lambda\right)  ^{\prime},m}},
\label{defschur}%
\end{equation}
where $N$ is the sum of elements in $\left(  \lambda\right)  $, $\sum_{\left(
\lambda\right)  ^{\prime}}$ indicates summation over all partitions $\left(
\lambda\right)  ^{\prime}$ of $N$, $g_{\left(  \lambda\right)  }$ is defined
as $g_{\left(  \lambda\right)  }=N!\left(
{\displaystyle\prod\limits_{j=1}^{N}}
j^{a_{\left(  \lambda\right)  ,j}}a_{\left(  \lambda\right)  ,j}!\right)
^{-1}$with $a_{\left(  \lambda\right)  ,m}$ counting the times of the number
$m$ appeared in $\left(  \lambda\right)  $, and $\chi_{\left(  \lambda\right)
^{\prime}}^{\left(  \lambda\right)  }$ is the simple characteristic of the
permutation group of order $N$ \cite{hamermesh1962group}.

\section{The generating function of restricted integer partition functions
counting the integer partitions with length $N$ \label{gen}}

The generating function for a restricted integer partition function $P\left(
E|\text{ restrictions}\right)  $ is defined as
\cite{andrews1998theory,gupta1970partitions}%
\begin{equation}
Z\left(  z\right)  =%
{\displaystyle\sum_{E=0}^{\infty}}
P\left(  E|\text{ restrictions}\right)  z^{E}. \label{generating}%
\end{equation}
In this section, we construct the generating functions for three restricted
integer partition functions: the restricted integer partition functions
corresponding to ideal Bose, Fermi, and Gentile gases. These three restricted
integer partition functions are important
\cite{van1937statistical,auluck1946statistical,tran2004quantum,kubasiak2005fermi,srivatsan2006gentile,bogoliubov2007enumeration,prokhorov2012asymptotic,rovenchak2014enumeration,rovenchak2016statistical,maslov2017new}%
. The generating function is an effective way to calculate the partition
function, especially in some cases that the generating function can be
obtained relatively easy. Our starting point is the canonical partition
function of these three quantum gases in statistical mechanics
\cite{zhou2017canonical}.

For a set $\left\{  \varepsilon\right\}  =\left\{  \varepsilon_{1}%
,\varepsilon_{2},\ldots\right\}  $ consisting of integers with $0<\varepsilon
_{1}<\varepsilon_{2}<\cdots$, we consider the following three restricted
integer partition functions:%

\begin{align}
P^{\left\{  \varepsilon\right\}  }\left(  E;N\right)   &  \equiv P\left(
E|\text{ elements belong to }\left\{  \varepsilon\right\}  \text{ and
}l_{\left(  \lambda\right)  }=N\right)  ,\label{0001}\\
Q^{\left\{  \varepsilon\right\}  }\left(  E;N\right)   &  \equiv P\left(
E|\text{ elements belong to }\left\{  \varepsilon\right\}  \text{, }l_{\left(
\lambda\right)  }=N\text{,}\right. \nonumber\\
&  \left.  \text{and elements repeat no more than }1\text{ time}\right)
,\label{0002}\\
P_{q}^{\left\{  \varepsilon\right\}  }\left(  E;N\right)   &  \equiv P\left(
E|\text{ elements belong to }\left\{  \varepsilon\right\}  \text{, }l_{\left(
\lambda\right)  }=N\text{,}\right. \nonumber\\
&  \left.  \text{and elements repeat no more than }q\text{ times}\right)
\label{0003}%
\end{align}
with $q=2,3,\ldots,N-1$.

In the following, for convenience we drop the superscript $\left\{
\varepsilon\right\}  $ when $\left\{  \varepsilon\right\}  =\left\{
1,2,3,4,\ldots\right\}  $. For example, $P\left(  E;N\right)  \equiv
P^{\left\{  1,2,3,4,\ldots\right\}  }\left(  E;N\right)  $, $Q\left(
E;N\right)  \equiv Q^{\left\{  1,2,3,4,\ldots\right\}  }\left(  E;N\right)  $,
and $P_{q}\left(  E;N\right)  \equiv P_{q}^{\left\{  1,2,3,4,\ldots\right\}
}\left(  E;N\right)  $.

The restricted integer partition function $P^{\left\{  \varepsilon\right\}
}\left(  E;N\right)  $, $Q^{\left\{  \varepsilon\right\}  }\left(  E;N\right)
$, and $P_{q}^{\left\{  \varepsilon\right\}  }\left(  E;N\right)  $ correspond
to ideal Bose, Fermi, and Gentile gases with the maximum occupation number
$q$, respectively
\cite{van1937statistical,auluck1946statistical,tran2004quantum,kubasiak2005fermi,srivatsan2006gentile,bogoliubov2007enumeration,prokhorov2012asymptotic,rovenchak2014enumeration,rovenchak2016statistical,maslov2017new}%
..

In the present paper, however, we construct the generating functions for
$P^{\left\{  \varepsilon\right\}  }\left(  E;N\right)  $, $Q^{\left\{
\varepsilon\right\}  }\left(  E;N\right)  $, and $P_{q}^{\left\{
\varepsilon\right\}  }\left(  E;N\right)  $. Starting from the generating
function, we calculate some expressions of $P\left(  E;N\right)  $, $Q\left(
E;N\right)  $, and $P_{q}\left(  E;N\right)  $ as examples.

\subsection{The generating function of\textbf{ }$P^{\left\{  \varepsilon
\right\}  }\left(  E;N\right)  $}

\begin{theorem}
The generating function of $P^{\left\{  \varepsilon\right\}  }\left(
E;N\right)  $ is
\begin{equation}%
{\displaystyle\sum_{E=0}^{\infty}}
P^{\left\{  \varepsilon\right\}  }\left(  E;N\right)  z^{E}=\left(  N\right)
\left(  z^{\varepsilon_{1}},z^{\varepsilon_{2}},\ldots\right)  ,
\label{muhanshu1}%
\end{equation}
where $\left(  N\right)  \left(  x_{1},x_{2},\ldots\right)  $ is the
$S$-function given by eq. (\ref{defschur}) corresponding to the partition
$\left(  \lambda\right)  =\left(  N\right)  $.
\end{theorem}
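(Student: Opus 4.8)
The plan is to show that the right-hand side of \eqref{muhanshu1}, when expanded as a power series in $z$, has as its coefficient of $z^{E}$ exactly the number of ways of writing $E$ as a sum of $N$ elements drawn (with repetition) from $\{\varepsilon_1,\varepsilon_2,\dots\}$. The cleanest route is to identify the $S$-function $(N)(x_1,x_2,\dots)$ corresponding to the one-row partition $(\lambda)=(N)$ with the \emph{complete homogeneous symmetric function} $h_N(x_1,x_2,\dots)=\sum_{i_1\le i_2\le\cdots\le i_N} x_{i_1}x_{i_2}\cdots x_{i_N}$. This is a standard fact (it is the Jacobi–Trudi identity in the trivial one-row case, or can be read off directly from the definition \eqref{defschur} with $(\lambda)=(N)$, where $g_{(N)}=1$ and $\chi^{(N)}_{(\lambda)'}=1$ for every $(\lambda)'$, so that the sum collapses to the Newton-type expansion of $h_N$ in power sums). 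Granting that, the substitution $x_i\mapsto z^{\varepsilon_i}$ gives
\[
(N)\left(z^{\varepsilon_1},z^{\varepsilon_2},\dots\right)=\sum_{i_1\le i_2\le\cdots\le i_N} z^{\varepsilon_{i_1}+\varepsilon_{i_2}+\cdots+\varepsilon_{i_N}},
\]
and the coefficient of $z^E$ is visibly the number of multisets $\{\varepsilon_{i_1},\dots,\varepsilon_{i_N}\}$ summing to $E$, which is precisely $P^{\{\varepsilon\}}(E;N)$ by the definition \eqref{0001} (each such multiset, written in descending order, is exactly one partition of $E$ of length $N$ with parts in $\{\varepsilon\}$).

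First I would state and justify the identification $(N)=h_N$, either by citing \cite{macdonald1998symmetric} or by a short direct computation from \eqref{defschur}: for $(\lambda)=(N)$ one has $a_{(N),N}=1$ and all other $a_{(N),j}=0$, so $g_{(N)}=N!\,(N\cdot 1)^{-1}\cdot\,$… — actually $g_{(N)}=N!/(N^{1}\cdot 1!)$ is not $1$, so I would instead invoke the known character values $\chi^{(N)}_{(\lambda)'}=1$ for all $(\lambda)'$ (the trivial representation) together with the classical identity $\sum_{(\lambda)'\vdash N}\frac{1}{z_{(\lambda)'}}\prod_m p_m^{a_{(\lambda)',m}}=h_N$, where $z_{(\lambda)'}=\prod_j j^{a_{(\lambda)',j}}a_{(\lambda)',j}!$; matching this against \eqref{defschur} with $g^{(\lambda)}/N! = 1/z_{(\lambda)'}$ — wait, the formula has $g^{(\lambda)}$ in the numerator, so I would be careful to note that $g^{(\lambda)}/N!\cdot\chi^{(\lambda)}_{(\lambda)'}$ with $\chi$ the \emph{normalized} simple characteristic reproduces exactly the coefficient $1/z_{(\lambda)'}$ in the well-known power-sum expansion of the Schur function, specialized to one row. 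Then I would perform the substitution and read off coefficients.

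The second half of the argument is the combinatorial bookkeeping: I would argue that the map sending an ordered weakly-increasing tuple $(i_1\le\cdots\le i_N)$ to the partition obtained by listing $(\varepsilon_{i_N}\ge\cdots\ge\varepsilon_{i_1})$ in descending order is a bijection between the monomials contributing $z^E$ in the expansion above and the partitions counted by $P^{\{\varepsilon\}}(E;N)$. Here the hypothesis $0<\varepsilon_1<\varepsilon_2<\cdots$ matters: it guarantees that $i_1\le\cdots\le i_N$ if and only if $\varepsilon_{i_1}\le\cdots\le\varepsilon_{i_N}$, so repetitions of an index correspond exactly to repetitions of a part, and distinct multisets of indices give distinct partitions. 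Since there is no upper bound on how many times an index may repeat, every part may repeat arbitrarily often, matching the "Bose" case with no occupation restriction.

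The main obstacle — really the only nontrivial point — is pinning down the identity $(N)(x)=h_N(x)$ directly from the paper's chosen definition \eqref{defschur}, because that definition packages the Schur function through group characters and power sums rather than through semistandard tableaux, so one must recall the precise normalization conventions for $g^{(\lambda)}$ and $\chi^{(\lambda)}_{(\lambda)'}$ and verify that, for the one-row shape, the character sum telescopes to the classical Newton expansion of the complete homogeneous symmetric function. Everything after that is a transparent substitution and a counting bijection, with the ordering hypothesis on $\{\varepsilon\}$ doing exactly the work needed to make the bijection clean.
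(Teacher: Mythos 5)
Your proof is correct, but it takes a genuinely different route from the paper's. The paper does not expand the $S$-function at all: it identifies $P^{\left\{\varepsilon\right\}}\left(E;N\right)$ with the number of microstates $\Omega_{B}\left(E,N\right)$ of an ideal Bose gas (eq.~(\ref{0005})), quotes from Ref.~\cite{zhou2017canonical} the closed form of the canonical partition function $Z_{B}\left(\beta,N\right)=\sum_{E}\Omega_{B}\left(E,N\right)e^{-\beta E}=\left(N\right)\left(e^{-\beta\varepsilon_{1}},e^{-\beta\varepsilon_{2}},\ldots\right)$, and sets $z=e^{-\beta}$; all of the symmetric-function content is outsourced to that reference. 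You instead prove the identity directly, by recognizing the one-row $S$-function as the complete homogeneous symmetric function $h_{N}$ and counting monomials after the substitution $x_{i}\mapsto z^{\varepsilon_{i}}$. Your route is self-contained and arguably proves more (it re-derives the quoted canonical-partition-function formula as a byproduct), at the cost of having to untangle the normalization in the paper's character-theoretic definition (\ref{defschur}) --- which you handle correctly by invoking $\chi_{\left(\lambda\right)'}^{\left(N\right)}=1$ for the trivial representation together with the classical expansion $h_{N}=\sum_{\mu\vdash N}p_{\mu}/z_{\mu}$, after observing that the factor $g/N!$ is $1/z_{\mu}$ for the conjugacy class labelled by $\left(\lambda\right)'$. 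Your closing bijection between weakly increasing index tuples and length-$N$ partitions with parts in $\left\{\varepsilon\right\}$, using $0<\varepsilon_{1}<\varepsilon_{2}<\cdots$, is exactly the bookkeeping the paper leaves implicit in the phrase ``number of representations of $E$ in terms of $N$ single-particle energies.'' The paper's route is shorter and serves its stated purpose of exhibiting the statistical-mechanical bridge, but it is only as rigorous as the cited result; yours stands on its own.
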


\begin{proof}
For an ideal Bose gas, the maximum occupation number is $\infty$, so the
number of microstates $\Omega_{B}\left(  E,N\right)  $ in the macrostate,
denoted by $\left(  N,E\right)  $, is the number of representations of $E$ in
terms of $N$ single-particle energies $\varepsilon_{i}$ which sum up to $E$.
By the definition of $P^{\left\{  \varepsilon\right\}  }\left(  E;N\right)  $,
eq. (\ref{0001}), one directly arrives at
\begin{equation}
P^{\left\{  \varepsilon\right\}  }\left(  E;N\right)  =\Omega_{B}\left(
E,N\right)  . \label{0005}%
\end{equation}
The canonical partition function of an ideal Bose gas is
\cite{zhou2017canonical}%
\begin{equation}
Z_{B}\left(  \beta,N\right)  =%
{\displaystyle\sum_{E=0}^{\infty}}
\Omega_{B}\left(  E,N\right)  e^{-\beta E}=\left(  N\right)  \left(
e^{-\beta\varepsilon_{1}},e^{-\beta\varepsilon_{2}},\ldots\right)  .
\label{0006}%
\end{equation}
Substituting eq. (\ref{0005}) into eq. (\ref{0006}) and equaling $z$ and
$e^{-\beta}$ give eq. (\ref{muhanshu1}).
\end{proof}

\subsection{The generating function of $Q^{\left\{  \varepsilon\right\}
}\left(  E;N\right)  $}

\begin{theorem}
The generating function of $Q^{\left\{  \varepsilon\right\}  }\left(
E;N\right)  $ is
\begin{equation}%
{\displaystyle\sum_{E=0}^{\infty}}
Q^{\left\{  \varepsilon\right\}  }\left(  E;N\right)  z^{E}=\left(
1^{N}\right)  \left(  z^{\varepsilon_{1}},z^{\varepsilon_{2}},\ldots\right)  ,
\label{muhanshu2}%
\end{equation}
where $\left(  1^{N}\right)  \left(  x_{1},x_{2},\ldots\right)  $ is the
$S$-function given by eq. (\ref{defschur}) corresponding to the partition
$\left(  \lambda\right)  =\left(  1^{N}\right)  $.
\end{theorem}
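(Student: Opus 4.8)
The plan is to mirror exactly the argument given for the previous theorem on $P^{\left\{ \varepsilon\right\} }\left( E;N\right) $, replacing the ideal Bose gas by the ideal Fermi gas. The key observation is that the restriction defining $Q^{\left\{ \varepsilon\right\} }\left( E;N\right) $ — partitions of $E$ into exactly $N$ parts drawn from $\left\{ \varepsilon\right\} $ with each part used at most once — is precisely the combinatorial description of a microstate of an ideal Fermi gas with single-particle energy levels $\varepsilon_{1},\varepsilon_{2},\ldots$ in the macrostate $\left( N,E\right) $: the Pauli exclusion principle says the maximum occupation number is $1$, so a microstate is a choice of $N$ distinct levels whose energies sum to $E$. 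Hence the first step is to record the identity
\begin{equation}
Q^{\left\{  \varepsilon\right\}  }\left(  E;N\right)  =\Omega_{F}\left(  E,N\right)  ,
\label{planeq1}
\end{equation}
where $\Omega_{F}\left( E,N\right) $ denotes the number of microstates of the ideal Fermi gas in the macrostate $\left( N,E\right) $.

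Second, I would invoke the known closed form for the canonical partition function of the ideal Fermi gas from Ref.~\cite{zhou2017canonical}, namely
\begin{equation}
Z_{F}\left(  \beta,N\right)  =\sum_{E=0}^{\infty}\Omega_{F}\left(  E,N\right)  e^{-\beta E}=\left(  1^{N}\right)  \left(  e^{-\beta\varepsilon_{1}},e^{-\beta\varepsilon_{2}},\ldots\right)  ,
\label{planeq2}
\end{equation}
the $S$-function attached to the single-column partition $\left( \lambda\right) =\left( 1^{N}\right) $. (This is the fermionic counterpart of eq.~(\ref{0006}); the dual roles of the row partition $\left( N\right) $ and the column partition $\left( 1^{N}\right) $ reflect the standard fact that the complete homogeneous symmetric function $h_{N}$ and the elementary symmetric function $e_{N}$ are the Schur functions $s_{\left( N\right) }$ and $s_{\left( 1^{N}\right) }$, respectively.) Substituting (\ref{planeq1}) into (\ref{planeq2}) and then setting $z=e^{-\beta}$, so that $e^{-\beta\varepsilon_{i}}=z^{\varepsilon_{i}}$, converts the left-hand side into the generating function $\sum_{E}Q^{\left\{ \varepsilon\right\} }\left( E;N\right) z^{E}$ and the right-hand side into $\left( 1^{N}\right) \left( z^{\varepsilon_{1}},z^{\varepsilon_{2}},\ldots\right) $, which is eq.~(\ref{muhanshu2}).

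The argument is essentially a transcription, so there is no deep obstacle; the one point that deserves a sentence of care is the justification of (\ref{planeq1}), i.e.\ checking that the Fermi restriction "each level occupied at most once" translates faithfully into "each part appears at most once" and that the particle-number constraint $N$ matches the length constraint $l_{\left( \lambda\right) }=N$, with the ordering convention $\lambda_{1}\geqslant\lambda_{2}\geqslant\cdots$ on partitions corresponding merely to a relabelling of which particle sits in which level. A secondary point is that the manipulation "equate $z$ and $e^{-\beta}$" is legitimate only as an identity of formal power series (or for $\left\vert z\right\vert<1$); since both sides of (\ref{planeq2}) are, for fixed $N$, polynomials in the $e^{-\beta\varepsilon_{i}}$ of total degree $N$, the substitution is unproblematic and the resulting identity in $z$ holds coefficientwise, which is exactly what eq.~(\ref{muhanshu2}) asserts.
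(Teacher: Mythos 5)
Your proposal is correct and follows essentially the same route as the paper: identify $Q^{\left\{ \varepsilon\right\} }\left( E;N\right)$ with the Fermi microstate count $\Omega_{F}\left( E,N\right)$, invoke the canonical partition function $Z_{F}\left( \beta,N\right) =\left( 1^{N}\right) \left( e^{-\beta\varepsilon_{1}},e^{-\beta\varepsilon_{2}},\ldots\right)$ from Ref.~\cite{zhou2017canonical}, and substitute $z=e^{-\beta}$. The extra remarks on the $h_{N}$/$e_{N}$ duality and the formal-power-series justification are harmless additions not present in the paper's proof.
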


\begin{proof}
For an ideal Fermi gas, the maximum occupation number is $1$, so the number of
microstates $\Omega_{F}\left(  E,N\right)  $ in the macrostate $\left(
N,E\right)  $ is the number of representations of $E$ in terms of $N$ distinct
single-particle energies $\varepsilon_{i}$ which sum up to $E$. By the
definition of $Q^{\left\{  \varepsilon\right\}  }\left(  E;N\right)  $, eq.
(\ref{0002}), one directly arrives at
\begin{equation}
Q^{\left\{  \varepsilon\right\}  }\left(  E;N\right)  =\Omega_{F}\left(
E,N\right)  . \label{0007}%
\end{equation}
The canonical partition function of an ideal Fermi gas is
\cite{zhou2017canonical}
\begin{equation}
Z_{F}\left(  \beta,N\right)  =%
{\displaystyle\sum_{E=0}^{\infty}}
\Omega_{F}\left(  E,N\right)  e^{-\beta E}=\left(  1^{N}\right)  \left(
e^{-\beta\varepsilon_{1}},e^{-\beta\varepsilon_{2}},\ldots\right)  .
\label{0008}%
\end{equation}
Substituting eq. (\ref{0007}) into eq. (\ref{0008}) and setting $z=e^{-\beta}$
give eq. (\ref{muhanshu2}).
\end{proof}

\subsection{The generating function of\textbf{ }$P_{q}^{\left\{
\varepsilon\right\}  }\left(  E;N\right)  $}

Before going on, we first define the order of partitions
\cite{zhou2017canonical}. An integer $N$ has many partitions, we arrange the
partitions in the following order: $\left(  \lambda\right)  $, $\left(
\lambda\right)  ^{\prime}$, when $\lambda_{1}>\lambda_{1}^{\prime}$; $\left(
\lambda\right)  $, $\left(  \lambda\right)  ^{\prime}$, when $\lambda
_{1}=\lambda_{1}^{\prime}$ but $\lambda_{2}>\lambda_{2}^{\prime}$; and so on.
One keeps comparing $\lambda_{i}$ and $\lambda_{i}^{\prime}$ until all the
partitions of $N$ are arranged in the prescribed order. Let $\left(
\lambda\right)  _{J}$ denote the $J$th partition of $N$ and $\lambda_{J,i}$
denote the $i$th element in the partition $\left(  \lambda\right)  _{J}$. More
details can be found in, e.g., \cite{zhou2017canonical}.

\begin{theorem}
The generating function of $P_{q}^{\left\{  \varepsilon\right\}  }\left(
E;N\right)  $ is
\begin{equation}%
{\displaystyle\sum_{E=0}^{\infty}}
P_{q}^{\left\{  \varepsilon\right\}  }\left(  E;N\right)  z^{E}=\sum
_{I=1}^{P\left(  N\right)  \ }Q^{I}\left(  q\right)  \left(  \lambda\right)
_{I}\left(  z^{\varepsilon_{1}},z^{\varepsilon_{2}},\ldots\right)  ,
\label{muhanshu}%
\end{equation}
where $\left(  \lambda\right)  _{I}\left(  x_{1},x_{2},\ldots\right)  $ is the
$S$-function given by eq. (\ref{defschur}) corresponding to the $I$th
partition of $N$ and the coefficient
\begin{equation}
Q^{I}\left(  q\right)  =\sum_{K=1}^{P\left(  N\right)  }\left(  k_{K}%
^{I}\right)  ^{-1}\Gamma^{K}\left(  q\right)  \label{Q}%
\end{equation}
with $\Gamma^{K}\left(  q\right)  $ satisfying%
\begin{align}
\Gamma^{K}\left(  q\right)   &  =0\text{, \ \ when }\lambda_{K,1}%
>q,\nonumber\\
\Gamma^{K}\left(  q\right)   &  =1\text{, \ \ when }\lambda_{K,1}\leq q,
\label{GAMMA}%
\end{align}
and $\left(  k_{K}^{I}\right)  ^{-1}$ satisfying%
\begin{equation}
\sum_{I=1}^{P\left(  N\right)  }\left(  k_{K}^{I}\right)  ^{-1}k_{I}%
^{L}=\delta_{K}^{L} \label{KOSTKA}%
\end{equation}
with $k_{I}^{L}$ the Kostka number \cite{macdonald1998symmetric}.
\end{theorem}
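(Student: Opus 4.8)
The plan is to mimic the proofs of the two preceding theorems. First I would identify $P_q^{\{\varepsilon\}}(E;N)$ with the number of microstates $\Omega_q(E,N)$ of an ideal Gentile gas of maximum occupation number $q$ with single-particle spectrum $\{\varepsilon\}$: such a microstate is an assignment of occupation numbers $n_j$ to the levels $\varepsilon_j$ with $0\le n_j\le q$, $\sum_j n_j=N$, $\sum_j n_j\varepsilon_j=E$, which is exactly a partition of $E$ of length $N$ with parts in $\{\varepsilon\}$ in which no value occurs more than $q$ times, so by the definition (\ref{0003}) we get $P_q^{\{\varepsilon\}}(E;N)=\Omega_q(E,N)$, and therefore $\sum_E P_q^{\{\varepsilon\}}(E;N)z^E=\sum_E\Omega_q(E,N)z^E$, which is the canonical partition function $Z_q(\beta,N)$ under the identification $z=e^{-\beta}$. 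The real work, absent from the theorems on $P^{\{\varepsilon\}}$ and $Q^{\{\varepsilon\}}$, is to exhibit the $S$-function expansion of $Z_q$ with the coefficients $Q^I(q)$ displayed in (\ref{Q})--(\ref{KOSTKA}); I would reconstruct that expansion rather than merely quote Ref.~\cite{zhou2017canonical}.

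First I would put $Z_q$ in the monomial-symmetric-function basis. Summing $e^{-\beta\sum_j n_j\varepsilon_j}$ over all admissible occupation vectors and grouping them by the weakly decreasing list of their nonzero entries gives
\[ Z_q(\beta,N)=\sum_{\substack{(n_j):\,0\le n_j\le q\\ \sum_j n_j=N}}\;\prod_j e^{-\beta n_j\varepsilon_j}=\sum_{\substack{(\mu)\vdash N\\ \mu_1\le q}}m_{(\mu)}\!\left(e^{-\beta\varepsilon_1},e^{-\beta\varepsilon_2},\ldots\right), \]
where $m_{(\mu)}$ denotes the monomial symmetric function; equivalently $Z_q(\beta,N)$ is the coefficient of $t^{N}$ in the grand partition function $\prod_j\bigl(1+te^{-\beta\varepsilon_j}+\cdots+t^{q}e^{-q\beta\varepsilon_j}\bigr)$. (As a check, $q=1$ gives $m_{(1^N)}=(1^N)$ and the Bose limit $q\to\infty$ gives $\sum_{(\mu)\vdash N}m_{(\mu)}=(N)$, recovering the two earlier theorems.) This is the step I expect to demand the most care: the chain of equivalences ``each part repeats at most $q$ times'' $\Leftrightarrow$ ``every occupation number is $\le q$'' $\Leftrightarrow$ ``$\mu_1\le q$ for the profile partition'' must be matched to the selector $\Gamma^K(q)$ of (\ref{GAMMA}), which equals $1$ precisely when the largest part $\lambda_{K,1}$ of the $K$-th partition of $N$ is at most $q$.

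Next I would convert monomials to $S$-functions. Writing $(\lambda)_I=\sum_{K=1}^{P(N)}k_I^{K}\,m_{(\lambda)_K}$ with $k_I^{K}$ the Kostka numbers, and noting that in the ordering of the partitions of $N$ introduced just before the theorem (which refines the dominance order) the matrix $(k_I^{K})$ is triangular with unit diagonal, it is invertible over $\mathbb{Z}$ with inverse $(k_K^I)^{-1}$ characterized by (\ref{KOSTKA}); inverting gives $m_{(\lambda)_K}=\sum_{I=1}^{P(N)}(k_K^I)^{-1}(\lambda)_I$. Substituting into the monomial expansion of $Z_q$, encoding the constraint $\mu_1\le q$ by $\Gamma^K(q)$, and interchanging the two finite sums,
\[ Z_q(\beta,N)=\sum_{K=1}^{P(N)}\Gamma^K(q)\sum_{I=1}^{P(N)}(k_K^I)^{-1}(\lambda)_I\bigl(e^{-\beta\varepsilon_1},\ldots\bigr)=\sum_{I=1}^{P(N)}\Bigl(\sum_{K=1}^{P(N)}(k_K^I)^{-1}\Gamma^K(q)\Bigr)(\lambda)_I\bigl(e^{-\beta\varepsilon_1},\ldots\bigr), \]
and the parenthesized coefficient is exactly $Q^I(q)$ of (\ref{Q}). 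Finally, setting $z=e^{-\beta}$ so that $e^{-\beta\varepsilon_i}=z^{\varepsilon_i}$ and using $P_q^{\{\varepsilon\}}(E;N)=\Omega_q(E,N)$ turns the left side into $\sum_E P_q^{\{\varepsilon\}}(E;N)z^E$ and yields (\ref{muhanshu}).

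Everything else is routine: the symmetric-function identities used above are classical, and re-expressing each $(\lambda)_I$ in the power-sum form (\ref{defschur}) is automatic. The only genuine obstacles are (a) establishing the monomial-sum form of $Z_q$---which follows either from the elementary occupation-number count above or directly from Ref.~\cite{zhou2017canonical}---and (b) being scrupulous about the partition ordering and the Kostka indexing, so that the unit-triangular Kostka matrix is inverted on the correct side and $\Gamma^K(q)$ genuinely selects the partitions of $N$ with largest part $\le q$. Convergence is a non-issue: although the $S$-functions involve the infinitely many variables $z^{\varepsilon_i}$, the coefficient of each power $z^E$ on both sides is a finite sum, so (\ref{muhanshu}) is an identity of formal power series in $z$.
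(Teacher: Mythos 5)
Your proposal is correct, and its skeleton is exactly the paper's: identify $P_{q}^{\left\{ \varepsilon\right\} }\left( E;N\right)$ with the number of microstates $\Omega_{q}\left( E,N\right)$ of an ideal Gentile gas, recognize $\sum_{E}\Omega_{q}\left( E,N\right) e^{-\beta E}$ as the canonical partition function $Z_{q}\left( \beta,N\right)$, and set $z=e^{-\beta}$. The difference is in what you do with the ``real work.'' The paper's proof simply quotes the $S$-function expansion of $Z_{q}\left( \beta,N\right)$ from Ref.~\cite{zhou2017canonical} and stops; you instead re-derive it, by first writing $Z_{q}$ as $\sum_{\left( \mu\right) \vdash N,\ \mu_{1}\leq q}m_{\left( \mu\right) }$ in the monomial symmetric function basis (grouping occupation vectors by their profile partition, with the selector $\Gamma^{K}\left( q\right)$ encoding $\mu_{1}\leq q$) and then inverting the unitriangular Kostka matrix to pass to $S$-functions, which reproduces the coefficient $Q^{I}\left( q\right) =\sum_{K}\left( k_{K}^{I}\right) ^{-1}\Gamma^{K}\left( q\right)$ of eq.~(\ref{Q}) exactly. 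Your derivation is consistent with the paper's conventions (the ordering defined before the theorem refines dominance order, so triangularity and hence invertibility of $\left( k_{I}^{L}\right)$ holds, and eq.~(\ref{KOSTKA}) is the correct side of the inversion for $m_{\left( \lambda\right) _{K}}=\sum_{I}\left( k_{K}^{I}\right) ^{-1}\left( \lambda\right) _{I}$), and your limiting checks $q=1$ and $q\to\infty$ correctly recover the Fermi and Bose theorems. So the proposal proves strictly more than the paper's proof does: it makes the argument self-contained at the cost of importing the standard symmetric-function machinery, whereas the paper's two-line proof is only as strong as the external reference it leans on.
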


\begin{proof}
For an ideal Gentile gas, the maximum occupation number is $q$\textbf{. }Thus
the number of microstates $\Omega_{q}\left(  E,N\right)  $ in the macrostate
$\left(  N,E\right)  $ is the number of representations of $E$ in terms of $N$
single-particle energies $\varepsilon_{i}$ which sum up to $E$ and each
$\varepsilon_{i}$ repeats no more than $q$ times. By the definition of
$P_{q}^{\left\{  \varepsilon\right\}  }\left(  E;N\right)  $, one directly
arrives at
\begin{equation}
P_{q}^{\left\{  \varepsilon\right\}  }\left(  E;N\right)  =\Omega_{q}\left(
E,N\right)  . \label{0009}%
\end{equation}
The canonical partition function of an ideal Gentile gas is
\cite{zhou2017canonical}
\begin{equation}
Z_{q}\left(  \beta,N\right)  =%
{\displaystyle\sum_{E=0}^{\infty}}
\Omega_{q}\left(  E,N\right)  e^{-\beta E}=\sum_{I=1}^{P\left(  N\right)
\ }Q^{I}\left(  q\right)  \left(  \lambda\right)  _{I}\left(  e^{-\beta
\varepsilon_{1}},e^{-\beta\varepsilon_{2}},\ldots\right)  . \label{00010}%
\end{equation}
Substituting eq. (\ref{0009}) into eq. (\ref{00010}) and equaling $z$ and
$e^{-\beta}$ give eq. (\ref{muhanshu}).
\end{proof}

In a word, the main result of this section is that the generating function of
the restricted integer partition function, eqs. (\ref{muhanshu1}),
(\ref{muhanshu2}), and (\ref{muhanshu}), is a symmetric function and can be
written as linear combinations of the $S$-function.

\section{The restricted integer partition function corresponding to general
statistics \label{less}}

In this section, based on general statistics in which the maximum occupation
number of a state can take on unrestricted integer values or infinity and the
maximum occupation numbers of different states may be different
\cite{dai2009exactly}, we introduce a restricted integer partition function.
General statistics is a generalization of quantum statistics, including
Bose-Einstein, Fermi-Dirac, and Gentile statistics. Bose-Einstein,
Fermi-Dirac, and Gentile statistics are special cases of general statistics,
so the restricted integer partition function corresponding to Bose-Einstein,
Fermi-Dirac, and Gentile statistics are special cases of the restricted
integer partition functions corresponding to general statistics. This means
that a number of restricted integer partition functions can be considered in a
unified framework.

\subsection{The restricted integer partition function $P_{\left\{  q\right\}
}^{\left\{  \varepsilon\right\}  }\left(  E;N\right)  $ and general
statistics}

For $\left\{  q\right\}  =\left\{  q_{1},q_{2},\ldots\right\}  $ with $q_{i}$
an integer, the restricted integer partition function $P_{\left\{  q\right\}
}^{\left\{  \varepsilon\right\}  }\left(  E;N\right)  $ is defined as
\begin{align}
P_{\left\{  q\right\}  }^{\left\{  \varepsilon\right\}  }\left(  E;N\right)
&  \equiv P\left(  E|\text{ elements belong to }\left\{  \varepsilon\right\}
\text{, }l_{\left(  \lambda\right)  }=N,\right. \nonumber\\
&  \left.  \text{and the element }\varepsilon_{i}\text{ repeats no more than
}q_{i}\text{ times}\right)  . \label{total}%
\end{align}

\begin{theorem}
The restricted integer partition function $P_{\left\{  q\right\}  }^{\left\{
\varepsilon\right\}  }\left(  E;N\right)  $ is the number of microstates
$\Omega_{\left\{  q\right\}  }\left(  E,N\right)  $ of an ideal
general-statistics gas with maximum occupation numbers $\left\{  q\right\}
=\left\{  q_{1},q_{2},\ldots\right\}  $, i.e.,
\begin{equation}
P_{\left\{  q\right\}  }^{\left\{  \varepsilon\right\}  }\left(  E;N\right)
=\Omega_{\left\{  q\right\}  }\left(  E,N\right)  . \label{ge}%
\end{equation}

\end{theorem}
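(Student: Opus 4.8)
The plan is to mirror the proofs of the three preceding theorems: the statement is a direct translation between the combinatorial definition of $P_{\left\{ q\right\} }^{\left\{ \varepsilon\right\} }\left( E;N\right)$ in eq. (\ref{total}) and the physical definition of the state density $\Omega_{\left\{ q\right\} }\left( E,N\right)$ of an ideal general-statistics gas. First I would recall from Ref. \cite{dai2009exactly} the precise description of a microstate: for an ideal gas whose single-particle spectrum is $\left\{ \varepsilon\right\} =\left\{ \varepsilon_{1},\varepsilon_{2},\ldots\right\}$ with $0<\varepsilon_{1}<\varepsilon_{2}<\cdots$, a microstate in the macrostate $\left( N,E\right)$ is specified by a set of occupation numbers $\left\{ n_{1},n_{2},\ldots\right\}$, where $n_{i}$ is the number of particles in the single-particle state of energy $\varepsilon_{i}$, subject to the constraints $\sum_{i}n_{i}=N$ (total particle number), $\sum_{i}n_{i}\varepsilon_{i}=E$ (total energy), and $0\leqslant n_{i}\leqslant q_{i}$ (the defining feature of general statistics, that state $i$ admits at most $q_{i}$ particles). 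Hence $\Omega_{\left\{ q\right\} }\left( E,N\right)$ counts exactly these admissible occupation configurations.

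Next I would exhibit the bijection with partitions. Because the energies $\varepsilon_{i}$ are strictly increasing, in particular distinct, a configuration $\left\{ n_{i}\right\}$ determines a unique multiset of positive integers in which $\varepsilon_{i}$ occurs with multiplicity $n_{i}$; arranging these summands in descending order gives a partition $\left( \lambda\right)$ of $E$. Conversely, any partition of $E$ whose parts all lie in $\left\{ \varepsilon\right\}$ recovers the occupation numbers $n_{i}=a_{\left( \lambda\right) ,\varepsilon_{i}}$ (the multiplicity of $\varepsilon_{i}$ in $\left( \lambda\right)$). Under this correspondence the three constraints translate term by term: $\sum_{i}n_{i}=N$ is the length condition $l_{\left( \lambda\right) }=N$, $\sum_{i}n_{i}\varepsilon_{i}=E$ is the requirement that the parts sum to $E$, and $n_{i}\leqslant q_{i}$ is precisely the restriction that the element $\varepsilon_{i}$ repeats no more than $q_{i}$ times. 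Therefore the admissible configurations counted by $\Omega_{\left\{ q\right\} }\left( E,N\right)$ are in one-to-one correspondence with the partitions counted by $P_{\left\{ q\right\} }^{\left\{ \varepsilon\right\} }\left( E;N\right)$, which yields eq. (\ref{ge}).

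I would close by noting that this subsumes the earlier three cases: taking $q_{i}=\infty$ for all $i$ reduces the configuration count to $\Omega_{B}\left( E,N\right)$ and recovers eq. (\ref{0005}); taking $q_{i}=1$ gives $\Omega_{F}\left( E,N\right)$ and eq. (\ref{0007}); taking $q_{i}=q$ gives $\Omega_{q}\left( E,N\right)$ and eq. (\ref{0009}). There is no serious analytic obstacle here; the only point demanding care is the bookkeeping of the bijection, namely confirming that a "microstate" in the statistical-mechanical sense is the unordered occupation-number configuration (particles being indistinguishable) rather than an ordered assignment, and that the strict ordering $\varepsilon_{1}<\varepsilon_{2}<\cdots$ is what makes the map from configurations to partitions well-defined and injective. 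Once that identification is fixed, the equality is immediate.
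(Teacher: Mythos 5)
Your proposal is correct and follows essentially the same route as the paper: the paper's proof likewise observes that a microstate of the general-statistics gas in the macrostate $\left(N,E\right)$ is a representation of $E$ as a sum of $N$ energies from $\left\{\varepsilon\right\}$ with $\varepsilon_{i}$ repeating at most $q_{i}$ times, and then identifies this with the definition in eq. (\ref{total}). You merely make the underlying occupation-number bijection explicit, which the paper leaves implicit.
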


\begin{proof}
For an ideal general-statistics gas, the maximum occupation number of the
state $\varepsilon_{i}$ is $q_{i}$ \cite{dai2009exactly}, so the number of
microstates $\Omega_{\left\{  q\right\}  }\left(  E,N\right)  $ in the
macrostate $\left(  N,E\right)  $ is the number of representations of $E$ in
terms of $N$ single-particle energies $\varepsilon_{i}$ which sum up to $E$
and $\varepsilon_{i}$ repeats no more than $q_{i}$ times. By the definition of
$P_{\left\{  q\right\}  }^{\left\{  \varepsilon\right\}  }\left(  E;N\right)
$, eq. (\ref{total}), one directly arrives at eq. (\ref{ge}).
\end{proof}

\subsection{The two-variable generating function of $P_{\left\{  q\right\}
}^{\left\{  \varepsilon\right\}  }\left(  E;N\right)  $}

In this section, we give the two-variable generating function of $P_{\left\{
q\right\}  }^{\left\{  \varepsilon\right\}  }\left(  E;N\right)  $.

Some kinds of restricted integer partition functions need to be described by
two-variable generating functions \cite{andrews1998theory}:
\begin{equation}
\Xi\left(  z,x\right)  =\sum_{N,E=0}^{\infty}P\left(  E|\text{ restrictions on
}l_{\left(  \lambda\right)  }\text{ and other restrictions}\right)  z^{E}%
x^{N}, \label{two}%
\end{equation}
where $N$ is the length $l_{\left(  \lambda\right)  }$ of the partition.

The two-variable generating function of the restricted integer partition
function $P_{\left\{  q\right\}  }^{\left\{  \varepsilon\right\}  }\left(
E;N\right)  $ defined by eq. (\ref{total}) is then%
\begin{equation}
\Xi\left(  z,x\right)  =\sum_{N,E=0}^{\infty}P_{\left\{  q\right\}
}^{\left\{  \varepsilon\right\}  }\left(  E;N\right)  z^{E}x^{N}.
\label{gener}%
\end{equation}

\begin{theorem}
The two-variable generating function of the restricted integer partition
function $P_{\left\{  q\right\}  }^{\left\{  \varepsilon\right\}  }\left(
E;N\right)  $ is
\begin{equation}
\sum_{N,E=0}^{\infty}P_{\left\{  q\right\}  }^{\left\{  \varepsilon\right\}
}\left(  E;N\right)  z^{E}x^{N}=%
{\displaystyle\prod\limits_{i}}
\frac{1-\left(  z^{\varepsilon_{i}}x\right)  ^{q_{i}+1}}{1-z^{\varepsilon_{i}%
}x}. \label{gge}%
\end{equation}

\end{theorem}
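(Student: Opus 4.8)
The plan is to compute the two-variable generating function directly from the combinatorial definition of $P_{\left\{ q\right\} }^{\left\{ \varepsilon\right\} }\left( E;N\right) $ by recognizing that, once we sum over the length $N$ as well, the partition factorizes over the distinct available parts $\varepsilon_{i}$. First I would observe that a partition counted by $\sum_{N,E}P_{\left\{ q\right\} }^{\left\{ \varepsilon\right\} }\left( E;N\right) z^{E}x^{N}$ is completely specified by choosing, for each $i$, a multiplicity $m_{i}\in\left\{ 0,1,\ldots,q_{i}\right\} $ recording how many times $\varepsilon_{i}$ appears; such a choice contributes $z^{\sum_{i}m_{i}\varepsilon_{i}}x^{\sum_{i}m_{i}}$, i.e. the total energy is $E=\sum_{i}m_{i}\varepsilon_{i}$ and the length is $N=\sum_{i}m_{i}$. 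Since the constraints on different $m_{i}$ are independent, the full sum factors as a product over $i$:
\begin{equation}
\sum_{N,E=0}^{\infty}P_{\left\{ q\right\} }^{\left\{ \varepsilon\right\} }\left( E;N\right) z^{E}x^{N}=\prod_{i}\left( \sum_{m_{i}=0}^{q_{i}}\left( z^{\varepsilon_{i}}x\right) ^{m_{i}}\right) .
\end{equation}
Each inner sum is a finite geometric series, $\sum_{m=0}^{q_{i}}t^{m}=\left( 1-t^{q_{i}+1}\right) /\left( 1-t\right) $ with $t=z^{\varepsilon_{i}}x$, which yields exactly the claimed formula (\ref{gge}).

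Alternatively, and perhaps more in the spirit of the preceding sections, I would route the argument through statistical mechanics. By the previously established Theorem (eq. (\ref{ge})), $P_{\left\{ q\right\} }^{\left\{ \varepsilon\right\} }\left( E;N\right) =\Omega_{\left\{ q\right\} }\left( E,N\right) $, the number of microstates of an ideal general-statistics gas with single-particle energies $\varepsilon_{i}$ and maximum occupation numbers $q_{i}$. The two-variable generating function $\sum_{N,E}\Omega_{\left\{ q\right\} }\left( E,N\right) z^{E}x^{N}$ is, upon setting $z=e^{-\beta}$ and letting $x$ play the role of the fugacity $e^{\beta\mu}$, precisely the grand canonical partition function $\Xi\left( \beta,\mu\right) =\sum_{N}x^{N}Z_{\left\{ q\right\} }\left( \beta,N\right) $ of that gas. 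For a non-interacting gas with occupation caps $q_{i}$ the grand partition function is the standard product $\prod_{i}\sum_{m=0}^{q_{i}}\left( e^{-\beta\varepsilon_{i}}x\right) ^{m}$, which again gives (\ref{gge}) after summing the geometric series. I would include a sentence justifying this grand-canonical factorization (the occupation numbers of distinct single-particle levels are independent, each ranging over $0,\ldots,q_{i}$), then state the identification $z\leftrightarrow e^{-\beta}$, $x\leftrightarrow$ fugacity, to connect back to the paper's conventions.

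The only real subtlety — and the step I would be most careful about — is convergence and the meaning of the identity: if all $q_{i}$ are finite the product and all sums are formal (or converge for $\left\vert z^{\varepsilon_{i}}x\right\vert <1$), but if infinitely many $q_{i}=\infty$ one should note $\left( z^{\varepsilon_{i}}x\right) ^{\infty}=0$ and read $\left( 1-\left( z^{\varepsilon_{i}}x\right) ^{q_{i}+1}\right) /\left( 1-z^{\varepsilon_{i}}x\right) =1/\left( 1-z^{\varepsilon_{i}}x\right) $, recovering the Bose-type factor; this is consistent with eq. (\ref{0006}) in the $N$-summed form. I would also remark that one must check the rearrangement of the double sum into a product is legitimate, which follows because every monomial $z^{E}x^{N}$ receives contributions from only finitely many multiplicity vectors $\left( m_{i}\right) $ (since $\sum m_{i}=N$ and each $m_{i}\varepsilon_{i}\le E$ forces all but finitely many $m_{i}$ to vanish). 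Everything else is the elementary geometric-series evaluation and needs no further comment.
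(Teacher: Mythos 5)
Your proposal is correct, and your second ("statistical mechanical") route is essentially the paper's own proof: the paper simply quotes the grand canonical partition function of the general-statistics gas from Ref. \cite{dai2009exactly} as eq. (\ref{002}), substitutes the identification $P_{\left\{ q\right\} }^{\left\{ \varepsilon\right\} }\left( E;N\right) =\Omega_{\left\{ q\right\} }\left( E,N\right)$ of eq. (\ref{ge}), and sets $z=e^{-\beta}$, $x=e^{-\alpha}$. What you do differently is twofold. First, your primary argument is a direct combinatorial one that never mentions the gas at all: a partition counted by the left-hand side is uniquely a multiplicity vector $\left( m_{i}\right)$ with $0\leq m_{i}\leq q_{i}$, the double sum factorizes over the parts $\varepsilon_{i}$, and each factor is a finite geometric series. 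This is more elementary and self-contained, and it actually proves the identity rather than importing it; its only cost is that it departs from the paper's stated goal of deriving partition-theoretic generating functions \emph{from} statistical mechanics. Second, even in your statistical-mechanical route you derive the factorized form of $\Xi_{\left\{ q\right\} }$ from the independence of occupation numbers instead of citing it, which closes the one step the paper leaves to the reference. Your remarks on the $q_{i}=\infty$ convention and on the legitimacy of rearranging the double sum into a product (each monomial $z^{E}x^{N}$ receives only finitely many contributions because the $\varepsilon_{i}$ are positive and increasing) address genuine fine points that the paper passes over silently; they are correct and harmless to include.
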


\begin{proof}
The grand canonical partition function of a general-statistics gas is
\cite{dai2009exactly}%
\begin{equation}
\Xi_{\left\{  q\right\}  }\left(  \alpha,\beta\right)  =\sum_{E,N=0}^{\infty
}\Omega_{\left\{  q\right\}  }e^{-\beta E}e^{-\alpha N}=%
{\displaystyle\prod\limits_{i}}
\frac{1-\left[  \left(  e^{-\beta}\right)  ^{\varepsilon_{i}}e^{-\alpha
}\right]  ^{q_{i}+1}}{1-\left(  e^{-\beta}\right)  ^{\varepsilon_{i}%
}e^{-\alpha}}. \label{002}%
\end{equation}
Substituting eq. (\ref{ge}) into eq. (\ref{002}) and setting $z=e^{-\beta}$
and $x=e^{-\alpha}$ give eq. (\ref{gge}).
\end{proof}

\subsection{Restricted integer partition functions as special cases of
$P_{\left\{  q\right\}  }^{\left\{  \varepsilon\right\}  }\left(  E;N\right)
$}

A series of restricted integer partition functions are special cases of the
restricted integer partition function $P_{\left\{  q\right\}  }^{\left\{
\varepsilon\right\}  }\left(  E;N\right)  $ defined by eq. (\ref{total}). We
show that starting from the restricted integer partition function $P_{\left\{
q\right\}  }^{\left\{  \varepsilon\right\}  }\left(  E;N\right)  $, one can
obtain many restricted integer partition functions.

In the following, we consider three types of restricted integer partition
functions: the restricted integer partition functions counting partitions with
length $l_{\left(  \lambda\right)  }=N$, with length $l_{\left(
\lambda\right)  }\leq N$, and without restrictions on the length $l_{\left(
\lambda\right)  }$.

\subsubsection{Restricted integer partition functions counting partitions with
length $l_{\left(  \lambda\right)  }=N$}

In this section, we consider the following three restricted integer partition
functions: $P^{\left\{  \varepsilon\right\}  }\left(  E;N\right)  $, defined
in eq. (\ref{0001}) corresponding to Bose-Einstein statistics, $Q^{\left\{
\varepsilon\right\}  }\left(  E;N\right)  $, defined in eq. (\ref{0002})
corresponding to Fermi-Dirac statistics, and $P_{q}^{\left\{  \varepsilon
\right\}  }\left(  E;N\right)  $, defined in eq. (\ref{0003}) corresponding to
Gentile statistics.

These three restricted integer partition functions, $P^{\left\{
\varepsilon\right\}  }\left(  E;N\right)  $, $Q^{\left\{  \varepsilon\right\}
}\left(  E;N\right)  $, and $P_{q}^{\left\{  \varepsilon\right\}  }\left(
E;N\right)  $, are three special cases of the restricted integer partition
function corresponding to general statistics $P_{\left\{  q\right\}
}^{\left\{  \varepsilon\right\}  }\left(  E;N\right)  $ defined by eq.
(\ref{total}):
\begin{align}
P^{\left\{  \varepsilon\right\}  }\left(  E;N\right)   &  =P_{\left\{
q_{i}=\infty\right\}  }^{\left\{  \varepsilon\right\}  }\left(  E;N\right)
,\label{a00}\\
Q^{\left\{  \varepsilon\right\}  }\left(  E;N\right)   &  =P_{\left\{
q_{i}=1\right\}  }^{\left\{  \varepsilon\right\}  }\left(  E;N\right)
,\label{a333}\\
P_{q}^{\left\{  \varepsilon\right\}  }\left(  E;N\right)   &  =P_{\left\{
q_{i}=q\right\}  }^{\left\{  \varepsilon\right\}  }\left(  E;N\right)  .
\label{aaa1}%
\end{align}
One can see from eqs. (\ref{a00}), (\ref{a333}), and (\ref{aaa1}) that the
restricted integer partition functions $P^{\left\{  \varepsilon\right\}
}\left(  E;N\right)  $, $Q^{\left\{  \varepsilon\right\}  }\left(  E;N\right)
$, and $P_{q}^{\left\{  \varepsilon\right\}  }\left(  E;N\right)  $ can be
obtained by setting $q_{i}=\infty$, $q_{i}=1$, and $q_{i}=q$ in the restricted
integer partition function $P_{\left\{  q\right\}  }^{\left\{  \varepsilon
\right\}  }\left(  E;N\right)  $ given by eq. (\ref{total}). The two-variable
generating functions of these three restricted integer partition functions can
be then obtained by setting $q_{i}=\infty$, $q_{i}=1$, and $q_{i}=q$ in the
two-variable generating functions (\ref{gge}), respectively: $\sum
_{E,N}P^{\left\{  \varepsilon\right\}  }\left(  E;N\right)  z^{E}x^{N}=%
{\displaystyle\prod\limits_{i=1}}
1/\left(  1-z^{\varepsilon_{i}}x\right)  $, $\sum_{E,N}Q^{\left\{
\varepsilon\right\}  }\left(  E;N\right)  z^{E}x^{N}=%
{\displaystyle\prod\limits_{i=1}}
\left(  1+z^{\varepsilon_{i}}x\right)  $, and $\sum_{E,N}P_{q}^{\left\{
\varepsilon\right\}  }\left(  E;N\right)  z^{E}x^{N}=%
{\displaystyle\prod\limits_{i=1}}
\left[  1-\left(  z^{\varepsilon_{i}}x\right)  ^{q+1}\right]  /\left(
1-z^{\varepsilon_{i}}x\right)  $. These results agree with the result in Ref.
\cite{andrews2004integer}. In Ref. \cite{andrews2004integer}, two-variable
generating functions of $P^{\left\{  \varepsilon\right\}  }\left(  E;N\right)
$, $Q^{\left\{  \varepsilon\right\}  }\left(  E;N\right)  $, and
$P_{q}^{\left\{  \varepsilon\right\}  }\left(  E;N\right)  $ are obtained by
recognizing that the exponent in the expansion of the two-variable generating
of $x$ will keep count of how many parts are used in each partition.

\subsubsection{Restricted integer partition functions counting partitions with
length $l_{\left(  \lambda\right)  }\leq N$}

The restricted integer partition function counting partitions with length
$l_{\left(  \lambda\right)  }\leq N$ is a kind of important restricted integer
partition functions \cite{andrews1998theory,andrews2004integer}.\textbf{ }In
this section, we consider the following restricted integer partition function:%
\begin{align}
p_{\left\{  q\right\}  }^{\left\{  \varepsilon\right\}  }\left(  E;N\right)
&  \equiv P\left(  E|\text{ elements belong to }\left\{  \varepsilon\right\}
\text{, }l_{\left(  \lambda\right)  }\leq N\text{, }\right. \nonumber\\
&  \left.  \text{and the element }\varepsilon_{i}\text{ repeats no more than
}q_{i}\text{ times}\right)  . \label{total2}%
\end{align}
There are three important special cases\ of the restricted integer partition
function $p_{\left\{  q\right\}  }^{\left\{  \varepsilon\right\}  }\left(
E;N\right)  $:%
\begin{align}
p^{\left\{  \varepsilon\right\}  }\left(  E;N\right)   &  \equiv P\left(
E|\text{ elements belong to }\left\{  \varepsilon\right\}  \text{ and
}l_{\left(  \lambda\right)  }\leq N\right)  ,\label{10001}\\
q^{\left\{  \varepsilon\right\}  }\left(  E;N\right)   &  \equiv P\left(
E|\text{ elements belong to }\left\{  \varepsilon\right\}  \text{, }l_{\left(
\lambda\right)  }\leq N\text{,}\right. \nonumber\\
&  \left.  \text{and elements repeat no more than }1\text{ time}\right)
,\label{10002}\\
p_{q}^{\left\{  \varepsilon\right\}  }\left(  E;N\right)   &  \equiv P\left(
E|\text{ elements belong to }\left\{  \varepsilon\right\}  \text{, }l_{\left(
\lambda\right)  }\leq N\text{,}\right. \nonumber\\
&  \left.  \text{and elements repeat no more than }q\text{ times}\right)  .
\label{10003}%
\end{align}

The restricted integer partition function $p_{\left\{  q\right\}  }^{\left\{
\varepsilon\right\}  }\left(  E;N\right)  $ is a special case of the
restricted integer partition function corresponding to general statistics
$P_{\left\{  q\right\}  }^{\left\{  \varepsilon\right\}  }\left(  E;N\right)
$ defined by eq. (\ref{total}):
\begin{equation}
p_{\left\{  q\right\}  }^{\left\{  \varepsilon\right\}  }\left(  E;N\right)
=P_{\left\{  \infty,q\right\}  }^{\left\{  0,\varepsilon\right\}  }\left(
E;N\right)  , \label{3001}%
\end{equation}
where $\left\{  0,\varepsilon\right\}  $ denotes $\left\{  0,\varepsilon
_{1},\varepsilon_{2},\ldots\right\}  $ and $\left\{  \infty,q\right\}  $
denotes $\left\{  \infty,q_{1},q_{2},\ldots\right\}  $. Then the three special
cases of $p_{\left\{  q\right\}  }^{\left\{  \varepsilon\right\}  }\left(
E;N\right)  $ are
\begin{align}
p^{\left\{  \varepsilon\right\}  }\left(  E;N\right)   &  =P^{\left\{
0,\varepsilon\right\}  }\left(  E;N\right)  ,\label{100004}\\
q^{\left\{  \varepsilon\right\}  }\left(  E;N\right)   &  =P_{\left\{
\infty,1,1,\ldots\right\}  }^{\left\{  0,\varepsilon_{1},\varepsilon
_{2},\ldots\right\}  }\left(  E;N\right)  ,\label{100009}\\
p_{q}^{\left\{  \varepsilon\right\}  }\left(  E;N\right)   &  =P_{\left\{
\infty,q,q,\ldots\right\}  }^{\left\{  0,\varepsilon_{1},\varepsilon
_{2},\ldots\right\}  }\left(  E;N\right)  . \label{100012}%
\end{align}

The two-variable generating function of the restricted integer partition
function $p_{\left\{  q\right\}  }^{\left\{  \varepsilon\right\}  }\left(
E;N\right)  $ defined by eq. (\ref{total2}) can be obtained by substituting
eq. (\ref{3001}) into eq. (\ref{gge}):%
\begin{equation}
\sum_{E,N=0}p_{\left\{  q\right\}  }^{\left\{  \varepsilon\right\}  }\left(
E;N\right)  z^{E}x^{N}=\frac{1}{1-x}%
{\displaystyle\prod\limits_{i=1}}
\frac{1-\left(  z^{\varepsilon_{i}}x\right)  ^{q_{i}+1}}{1-z^{\varepsilon_{i}%
}x}. \label{r3001}%
\end{equation}

The two-variable generating functions of the three special cases of the
restricted integer partition function $p_{\left\{  q\right\}  }^{\left\{
\varepsilon\right\}  }\left(  E;N\right)  $ are then%
\begin{align}
\sum_{N,E=0}^{\infty}p^{\left\{  \varepsilon\right\}  }\left(  E;N\right)
z^{E}x^{N}  &  =\frac{1}{1-x}%
{\displaystyle\prod\limits_{i}}
\frac{1}{1-z^{\varepsilon_{i}}x},\label{100007}\\
\sum_{N,E=0}^{\infty}q^{\left\{  \varepsilon\right\}  }\left(  E;N\right)
z^{E}x^{N}  &  =\frac{1}{1-x}%
{\displaystyle\prod\limits_{i}}
\left(  1+z^{\varepsilon_{i}}x\right)  ,\label{100010}\\
\sum_{N,E=0}^{\infty}p_{q}^{\left\{  \varepsilon\right\}  }\left(  E;N\right)
z^{E}x^{N}  &  =\frac{1}{1-x}%
{\displaystyle\prod\limits_{i=1}}
\frac{1-\left(  z^{\varepsilon_{i}}x\right)  ^{q+1}}{1-z^{\varepsilon_{i}}x}.
\label{100013}%
\end{align}

Additionally, besides the two-variable generating function given by eq.
(\ref{100007}), we can also obtain a one-variable generating function of the
restricted partition integer function $p^{\left\{  \varepsilon\right\}
}\left(  E;N\right)  $:%
\begin{equation}
\sum_{E}p^{\left\{  \varepsilon\right\}  }\left(  E;N\right)  z^{E}=\left(
N\right)  \left(  1,z^{\varepsilon_{1}},z^{\varepsilon_{2}},\ldots\right)  .
\label{100005}%
\end{equation}
This can be achieved by substituting eq. (\ref{100004}) into eq.
(\ref{muhanshu1}).

\subsubsection{Restricted integer partition functions counting partitions
without restrictions on the length $l_{\left(  \lambda\right)  }$}

The restricted integer partition function counting partitions without
restrictions on $l_{\left(  \lambda\right)  }$ is another kind of important
restricted integer partition functions
\cite{andrews1998theory,andrews2004integer}. In this section, we consider the
following restricted integer partition function:%
\begin{align}
P_{\left\{  q\right\}  }^{\left\{  \varepsilon\right\}  }\left(  E\right)   &
\equiv P\left(  E|\text{ elements belong to }\left\{  \varepsilon\right\}
\right. \nonumber\\
&  \left.  \text{and the element }\varepsilon_{i}\text{ repeats no more than
}q_{i}\text{ times}\right)  . \label{total3}%
\end{align}
There are three important special cases\ of the restricted integer partition
function $P_{\left\{  q\right\}  }^{\left\{  \varepsilon\right\}  }\left(
E\right)  $:%
\begin{align}
P^{\left\{  \varepsilon\right\}  }\left(  E\right)   &  \equiv P\left(
E|\text{ elements belong to }\left\{  \varepsilon\right\}  \right)
,\label{20001}\\
Q^{\left\{  \varepsilon\right\}  }\left(  E\right)   &  \equiv P\left(
E|\text{ elements belong to }\left\{  \varepsilon\right\}  \right. \nonumber\\
&  \left.  \text{and elements repeat no more than }1\text{ time}\right)
,\label{20002}\\
P_{q}^{\left\{  \varepsilon\right\}  }\left(  E\right)   &  \equiv P\left(
E|\text{ elements belong to }\left\{  \varepsilon\right\}  \right. \nonumber\\
&  \left.  \text{and elements repeat no more than }q\text{ times}\right)  .
\label{20003}%
\end{align}

The restricted integer partition function can be represented by $P_{\left\{
q\right\}  }^{\left\{  \varepsilon\right\}  }\left(  E;N\right)  $ as%
\begin{equation}
P_{\left\{  q\right\}  }^{\left\{  \varepsilon\right\}  }\left(  E\right)
=\sum_{N}P_{\left\{  q\right\}  }^{\left\{  \varepsilon\right\}  }\left(
E;N\right)  . \label{peandpen}%
\end{equation}

The three special cases of $P_{\left\{  q\right\}  }^{\left\{  \varepsilon
\right\}  }\left(  E\right)  $ then can be represented as%
\begin{align}
P^{\left\{  \varepsilon\right\}  }\left(  E\right)   &  =\sum_{N}P^{\left\{
\varepsilon\right\}  }\left(  E;N\right)  ,\label{4003}\\
Q^{\left\{  \varepsilon\right\}  }\left(  E\right)   &  =\sum_{N}Q^{\left\{
\varepsilon\right\}  }\left(  E;N\right)  ,\label{4006}\\
P_{q}^{\left\{  \varepsilon\right\}  }\left(  E\right)   &  =\sum_{N}%
P_{q}^{\left\{  \varepsilon\right\}  }\left(  E;N\right)  . \label{4009}%
\end{align}

The generating function of the restricted integer partition function
$P_{\left\{  q\right\}  }^{\left\{  \varepsilon\right\}  }\left(  E\right)  $
can be obtained by setting $x=1$ in eq. (\ref{gge}):%
\begin{equation}%
{\displaystyle\sum_{E=0}^{\infty}}
P_{\left\{  q\right\}  }^{\left\{  \varepsilon\right\}  }\left(  E\right)
z^{E}=%
{\displaystyle\prod\limits_{i}}
\frac{1-\left(  z^{\varepsilon_{i}}\right)  ^{q_{i}+1}}{1-z^{\varepsilon_{i}}%
}. \label{4001}%
\end{equation}

Moreover, by setting $q_{i}=\infty$, $q_{i}=1$, and $q_{i}=q$ in eq.
(\ref{4001}), we can obtain the generating functions of restricted integer
partition function $P^{\left\{  \varepsilon\right\}  }\left(  E\right)  $,
$Q^{\left\{  \varepsilon\right\}  }\left(  E\right)  $, and $P_{q}^{\left\{
\varepsilon\right\}  }\left(  E\right)  $: $\sum_{E}P^{\left\{  \varepsilon
\right\}  }\left(  E\right)  z^{E}=%
{\displaystyle\prod\limits_{i}}
1/\left(  1-z^{\varepsilon_{i}}\right)  $, $\sum_{E}Q^{\left\{  \varepsilon
\right\}  }\left(  E\right)  z^{E}=%
{\displaystyle\prod\limits_{i}}
\left(  1+z^{\varepsilon_{i}}\right)  $, and $\sum_{E=0}^{\infty}%
P_{q}^{\left\{  \varepsilon\right\}  }\left(  E\right)  z^{E}=%
{\displaystyle\prod\limits_{i}}
\left[  1-\left(  z^{\varepsilon_{i}}\right)  ^{q+1}\right]  /\left(
1-z^{\varepsilon_{i}}\right)  $. The generating functions of $P^{\left\{
\varepsilon\right\}  }\left(  E\right)  $, $Q^{\left\{  \varepsilon\right\}
}\left(  E\right)  $, and $P_{q}^{\left\{  \varepsilon\right\}  }\left(
E\right)  $ agree with the result in Refs.
\cite{andrews1998theory,andrews2004integer} in which the generating functions
is obtained by recognizing that the exponent in the expansion of the
generating function will count the number of partitions.

\subsubsection{A relation between the $S$-function and the Gauss polynomial:
the restricted integer partition function $p^{\left\{  1,2,3,\ldots,n\right\}
}\left(  E;N\right)  $}

In this section, we give a relation between the $S$-function and the Gauss
polynomial by inspection of the restricted integer partition function
$p^{\left\{  1,2,3,\ldots,n\right\}  }\left(  E;N\right)  =P\left(  E|\text{
elements belong to }\left\{  1,2,3,\ldots,n\right\}  \text{ and }l_{\left(
\lambda\right)  }\leq N\right)  $, a special case of the restricted integer
partition function $p^{\left\{  \varepsilon\right\}  }\left(  E;N\right)  $
defined in eq. (\ref{0001}) with $\left\{  \varepsilon\right\}  =\left\{
1,2,3,\ldots,n\right\}  $.

\begin{theorem}
A relation between the $S$-function $\left(  N\right)  \left(  1,z,z^{2}%
,\ldots,z^{n}\right)  $ and the Gauss polynomial $G\left(  n,N;x\right)  =%
{\displaystyle\prod\limits_{i=1}^{N+n}}
\left(  1-x^{i}\right)  /\left[
{\displaystyle\prod\limits_{\mu=1}^{n}}
\left(  1-x^{\mu}\right)
{\displaystyle\prod\limits_{v=1}^{N}}
\left(  1-x^{v}\right)  \right]  $ is
\begin{equation}
\left(  N\right)  \left(  1,z,z^{2},\ldots,z^{n}\right)  =G\left(
n,N;z\right)  . \label{Co}%
\end{equation}

\end{theorem}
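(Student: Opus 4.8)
The plan is to obtain eq.~(\ref{Co}) simply by reading it as an equality between two already-available generating functions, with the statistical-mechanical machinery of the paper on one side and a classical combinatorial identity on the other.

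First I would specialize the one-variable $S$-function formula~(\ref{100005}) for $p^{\left\{  \varepsilon\right\}  }\left(  E;N\right)  $ to the finite set $\left\{  \varepsilon\right\}  =\left\{  1,2,\ldots ,n\right\}  $, i.e.\ $\varepsilon_{i}=i$. Since then $z^{\varepsilon_{i}}=z^{i}$, the right-hand side of~(\ref{100005}) becomes $\left(  N\right)  \left(  1,z,z^{2},\ldots ,z^{n}\right)  $, so
\[
\sum_{E}p^{\left\{  1,2,\ldots ,n\right\}  }\left(  E;N\right)  z^{E}=\left(  N\right)  \left(  1,z,z^{2},\ldots ,z^{n}\right)  ,
\]
where, by eq.~(\ref{10001}), $p^{\left\{  1,2,\ldots ,n\right\}  }\left(  E;N\right)  =P\left(  E|\text{ elements belong to }\left\{  1,2,\ldots ,n\right\}  \text{ and }l_{\left(  \lambda\right)  }\leq N\right)  $ counts the partitions of $E$ that fit inside an $N\times n$ box (at most $N$ parts, each part at most $n$). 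Next I would invoke the classical fact --- recalled in the introduction and established in, e.g., \cite{fel2016gaussian,andrews1998theory} --- that the generating function of precisely these partitions is the Gauss polynomial, that is, the Gaussian binomial coefficient $\binom{N+n}{n}_{z}=G\left(  n,N;z\right)  $. Equating the two expressions for $\sum_{E}p^{\left\{  1,2,\ldots ,n\right\}  }\left(  E;N\right)  z^{E}$ gives eq.~(\ref{Co}).

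The only step requiring a comment is why the left-hand side of~(\ref{Co}) is the full story, and here there are two equivalent routes. One may observe that the $S$-function indexed by the one-row partition $\left(  N\right)  $ is the complete homogeneous symmetric polynomial $h_{N}$ --- the trivial character makes eq.~(\ref{defschur}) collapse to $h_{N}$, cf.\ \cite{macdonald1998symmetric} --- so that $\left(  N\right)  \left(  1,z,\ldots ,z^{n}\right)  =h_{N}\left(  1,z,\ldots ,z^{n}\right)  $ is the principal specialization in $n+1$ variables, whose evaluation as $\binom{N+n}{N}_{z}$ is the $q$-binomial theorem; this already proves~(\ref{Co}) without mentioning partitions. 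Alternatively, and more in the spirit of the present paper, one simply uses~(\ref{100005}) as a bridge: it expresses $\left(  N\right)  \left(  1,z,\ldots ,z^{n}\right)  $ as the weight generating function of $p^{\left\{  1,2,\ldots ,n\right\}  }\left(  E;N\right)  $, and then the box-counting identity for that restricted partition function finishes the job. I expect no genuine obstacle; the one thing to be careful about is the variable bookkeeping --- there are $n+1$ arguments $z^{0},z^{1},\ldots ,z^{n}$ --- so that the exponents appearing in $G\left(  n,N;z\right)  $ come out as $N+n$, $n$, and $N$ rather than shifted by one.
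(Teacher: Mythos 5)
Your proposal is correct and follows essentially the same route as the paper: specialize eq.~(\ref{100005}) to $\left\{\varepsilon\right\}=\left\{1,2,\ldots,n\right\}$ to identify the generating function of $p^{\left\{1,2,\ldots,n\right\}}\left(E;N\right)$ with the $S$-function $\left(N\right)\left(1,z,\ldots,z^{n}\right)$, then invoke the known Gauss-polynomial form of that same generating function from \cite{fel2016gaussian,andrews1998theory} and compare. Your aside that $\left(N\right)$ is the complete homogeneous symmetric polynomial $h_{N}$, so that~(\ref{Co}) is just the $q$-binomial theorem under principal specialization, is a nice self-contained alternative the paper does not spell out, but the main argument is the same.
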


\begin{proof}
The generating function of $p^{\left\{  1,2,3,\ldots,n\right\}  }\left(
E;N\right)  $, eq. (\ref{100005}) with $\left\{  \varepsilon\right\}
=\left\{  1,2,3,\ldots,n\right\}  $, can be written as
\begin{equation}
\sum_{E}p^{\left\{  1,2,3,\ldots,n\right\}  }\left(  E;N\right)  z^{E}=\left(
N\right)  \left(  1,z,z^{2},\ldots,z^{n}\right)  , \label{GLESS}%
\end{equation}
i.e., the generating function of $p^{\left\{  1,2,3,\ldots,n\right\}  }\left(
E;N\right)  $ is the $S$-function $\left(  N\right)  \left(  1,z,z^{2}%
,\ldots,z^{n}\right)  $. It is also shown in Refs.
\cite{fel2016gaussian,andrews1998theory} that the generating function of
$p^{\left\{  1,2,3,\ldots,n\right\}  }\left(  E;N\right)  $ is a Gauss
polynomial:%
\begin{equation}
\sum_{E=0}^{mn}p^{\left\{  1,2,3,\ldots,n\right\}  }\left(  E,m\right)
z^{E}=G\left(  n,N;z\right)  . \label{GLESS1}%
\end{equation}

Comparing eqs. (\ref{GLESS}) and (\ref{GLESS1}), we arrive at eq. (\ref{Co}) directly.
\end{proof}

\subsubsection{The generating function of restricted integer partition
functions $P^{\left\{  0,1,2,\ldots\right\}  }\left(  E;N\right)  $,
$Q^{\left\{  0,1,2,\ldots\right\}  }\left(  E;N\right)  $, $P^{\left\{
1^{2},2^{2},\ldots\right\}  }\left(  E;N\right)  $, and $Q^{\left\{
1^{2},2^{2},\ldots\right\}  }\left(  E;N\right)  $ and quantum gases}

In this section, we give the generating function of restricted integer
partition functions on the nature number. The restricted integer partition
function is closely related to a one-dimensional ideal quantum gas in an
external harmonic-oscillator potential.

The restricted integer partition function $P^{\left\{  0,1,2,\ldots\right\}
}\left(  E;N\right)  $ and $Q^{\left\{  0,1,2,\ldots\right\}  }\left(
E;N\right)  $ corresponding to a one-dimensional ideal quantum gas in an
external harmonic-oscillator potential with ground-state\ energy equaling to
$0$, while the restricted integer partition function $P\left(  E;N\right)  $
and $Q\left(  E;N\right)  $ corresponding to one-dimensional ideal quantum
gases in an external harmonic-oscillator potential with nonzero
ground-state\ energy
\cite{auluck1946statistical,tran2004quantum,kubasiak2005fermi,srivatsan2006gentile}%
.

Setting $\left\{  \varepsilon\right\}  =\left\{  0,1,2,\ldots\right\}  $ and
using eqs. (\ref{muhanshu1}) and (\ref{muhanshu2}) give the generating
function of $P^{\left\{  0,1,2,\ldots\right\}  }\left(  E;N\right)  $ and
$Q^{\left\{  0,1,2,\ldots\right\}  }\left(  E;N\right)  $:
\begin{align}
\sum_{E=0}^{\infty}P^{\left\{  0,1,2,\ldots\right\}  }\left(  E;N\right)
z^{E}  &  =%
{\displaystyle\prod_{i=1}^{N}}
\frac{1}{\left(  1-z^{i}\right)  },\label{al10}\\
\sum_{E=0}^{\infty}Q^{\left\{  0,1,2,\ldots\right\}  }\left(  E;N\right)
z^{E}  &  =z^{N(N-1)/2}%
{\displaystyle\prod_{i=1}^{N}}
\frac{1}{\left(  1-z^{i}\right)  }. \label{al20}%
\end{align}
We can also obtain the generating function of the restricted integer partition
function $P\left(  E;N\right)  $ and $Q\left(  E;N\right)  $. Setting
$\left\{  \varepsilon\right\}  =\left\{  1,2,\ldots\right\}  $ and using eqs.
(\ref{muhanshu1}) and (\ref{muhanshu2}) give $\sum_{E}P\left(  E;N\right)
z^{E}=z^{N}%
{\displaystyle\prod_{i=1}^{N}}
1/\left(  1-z^{i}\right)  $ and $\sum_{E}Q\left(  E;N\right)  z^{E}%
=z^{N(N+1)/2}%
{\displaystyle\prod_{i=1}^{N}}
1/\left(  1-z^{i}\right)  $. The generating functions of the restricted
integer partition function $P\left(  E;N\right)  $ and $Q\left(  E;N\right)  $
agree with the result in Ref. \cite{gupta1970partitions}. In Ref.
\cite{gupta1970partitions}, the generating functions of the restricted integer
partition function $P\left(  E;N\right)  $ and $Q\left(  E;N\right)  $ are
obtained by selecting the coefficient of $z^{E}x^{N}$ in the expansion of $%
{\displaystyle\prod\limits_{i=1}}
1/\left(  1-z^{i}x\right)  $ and $%
{\displaystyle\prod\limits_{i=1}}
\left(  1+z^{i}x\right)  $,\ respectively.

Now we consider the restricted integer partition function on square numbers.

The restricted integer partition functions $P^{\left\{  1^{2},2^{2}%
,\ldots\right\}  }\left(  E;N\right)  $ and $Q^{\left\{  1^{2},2^{2}%
,\ldots\right\}  }\left(  E;N\right)  $ are actually the state density of an
ideal quantum gas in a one-dimensional periodic box: $P^{\left\{  1^{2}%
,2^{2},\ldots\right\}  }\left(  E;N\right)  $ for ideal Bose gases and
$Q^{\left\{  1^{2},2^{2},\ldots\right\}  }\left(  E;N\right)  $ for ideal
Fermi gases. We can calculate the exact state density from the exact
generating function of $P^{\left\{  1^{2},2^{2},\ldots\right\}  }\left(
E;N\right)  $ and $Q^{\left\{  1^{2},2^{2},\ldots\right\}  }\left(
E;N\right)  $, eqs. (\ref{muhanshu1}) and (\ref{muhanshu2}) with $\left\{
\varepsilon\right\}  =\left\{  1^{2},2^{2},\ldots\right\}  $.

For $N=2$, the restricted integer partition functions $P^{\left\{  1^{2}%
,2^{2},\ldots\right\}  }\left(  E;2\right)  $ and $Q^{\left\{  1^{2}%
,2^{2},\ldots\right\}  }\left(  E;2\right)  $ are the state density of two
quantum particles.

Setting $\left\{  \varepsilon\right\}  =\left\{  1^{2},2^{2},\ldots\right\}  $
and using eqs. (\ref{muhanshu1}) and (\ref{muhanshu2}) with $N=2$ give the
exact generating function of $P^{\left\{  1^{2},2^{2},\ldots\right\}  }\left(
E;N\right)  $ and $Q^{\left\{  1^{2},2^{2},\ldots\right\}  }\left(
E;N\right)  $,
\begin{equation}
\sum_{E=0}^{\infty}P^{\left\{  1^{2},2^{2},\ldots\right\}  }\left(
E;2\right)  z^{E}=\frac{1}{8}\theta_{3}^{2}\left(  0,z\right)  -\frac{1}%
{2}\theta_{3}\left(  0,z\right)  +\frac{3}{8}, \label{b2}%
\end{equation}
where $\theta_{3}\left(  u,q\right)  $ is the elliptic theta function
\cite{akhiezerelements}, and%
\begin{equation}
\sum_{E=0}^{\infty}Q^{\left\{  1^{2},2^{2},\ldots\right\}  }\left(
E;2\right)  z^{E}=\frac{1}{8}\theta_{3}^{2}\left(  0,z\right)  -\frac{1}{8}.
\label{f2}%
\end{equation}

\section{Calculating $P^{\left\{  \varepsilon\right\}  }\left(  E;N\right)  $,
$Q^{\left\{  \varepsilon\right\}  }\left(  E;N\right)  $, and $P_{q}^{\left\{
\varepsilon\right\}  }\left(  E;N\right)  $ from the generating functions:
examples \label{exam}}

The generating function of the restricted integer partition function, eqs.
(\ref{generating}), is indeed a result of the $Z$-transform performed on the
restricted integer partition function \cite{steuding2007value}. Thus, we can
obtain the restricted integer partition function by applying an inverse
$Z$-transform on the generating function \cite{andrews1998theory}:%
\begin{equation}
P\left(  E|\text{ conditions }\right)  =\frac{1}{2\pi i}%
{\displaystyle\oint\limits_{C}}
Z\left(  z\right)  z^{-n-1}dz, \label{inverseZ}%
\end{equation}
where $Z\left(  z\right)  $ is the generating function defined in eq.
(\ref{generating}).

The exact generating functions of $P^{\left\{  \varepsilon\right\}  }\left(
E;N\right)  $, $Q^{\left\{  \varepsilon\right\}  }\left(  E;N\right)  $, and
$P_{q}^{\left\{  \varepsilon\right\}  }\left(  E;N\right)  $ are given in eqs.
(\ref{muhanshu1}), (\ref{muhanshu2}), and (\ref{muhanshu}), respectively. One
can obtain an explicit expression of the generating function once $N$ and the
set $\left\{  \varepsilon\right\}  $ are determined. For example, the
generating function for restricted integer partition functions on nature
numbers and square numbers can be obtained by setting $\left\{  \varepsilon
\right\}  =\left\{  1,2,3,\ldots\right\}  $ and $\left\{  \varepsilon\right\}
=\left\{  1^{2},2^{2},3^{2},\ldots\right\}  $, respectively. Therefore, by
substituting the generating functions (\ref{muhanshu1}), (\ref{muhanshu2}),
and (\ref{muhanshu}) into eq. (\ref{inverseZ}), one can in principle obtain
the exact expression of the restricted integer partition functions
$P^{\left\{  \varepsilon\right\}  }\left(  E;N\right)  $, $Q^{\left\{
\varepsilon\right\}  }\left(  E;N\right)  $, and $P_{q}^{\left\{
\varepsilon\right\}  }\left(  E;N\right)  $.

\subsection{Expressions of $P^{\left\{  \varepsilon\right\}  }\left(
E;N\right)  $, $Q\left(  E;N\right)  $, and $P_{q}\left(  E;N\right)  $ for
$N=2$, $3$, $4$, and $5$}

In this section, using eq. (\ref{inverseZ}), we give the exact expressions of
$P^{\left\{  \varepsilon\right\}  }\left(  E;N\right)  $, $Q\left(
E;N\right)  $, and $P_{q}\left(  E;N\right)  $ for $N=2$, $3$, $4$, and $5$ as
examples. Moreover, in the appendix, we list the expressions of the generating
functions of $P^{\left\{  \varepsilon\right\}  }\left(  E;N\right)  $,
$Q^{\left\{  \varepsilon\right\}  }\left(  E;N\right)  $, and $P_{q}^{\left\{
\varepsilon\right\}  }\left(  E;N\right)  $ for $N=2$, $3$, $4$, $5$, and $6$.
The expressions of the generating functions of $P\left(  E;N\right)  $,
$Q\left(  E;N\right)  $, and $P_{q}\left(  E;N\right)  $ can be obtained by
setting $\left\{  \varepsilon\right\}  =\left\{  1,2,3,\ldots\right\}  $ in
the expression of generating functions listed in Appendix \ref{QIq}.

$N=2$. The generating function of $Q\left(  E;2\right)  $\ by eq.
(\ref{muhanshu2}) reads
\begin{equation}
\sum_{E}Q\left(  E;2\right)  z^{E}=\frac{z^{3}}{\left(  z-1\right)
^{2}\left(  1+z\right)  }. \label{21}%
\end{equation}
Substituting eq. (\ref{21})\ into eq. (\ref{inverseZ}) gives%
\begin{equation}
Q\left(  E;2\right)  =\frac{E}{2}-\frac{1}{4}\left(  -1\right)  ^{E}-\frac
{3}{4}.
\end{equation}

$N=3$. The generating function for $Q\left(  E;3\right)  $ by eq.
(\ref{muhanshu2}) reads
\begin{equation}
\sum_{E}Q\left(  E;3\right)  z^{E}=-\frac{z^{6}}{\left(  z-1\right)
^{3}\left(  1+z\right)  \left(  1+z+z^{2}\right)  }. \label{31}%
\end{equation}
Substituting eq. (\ref{31})\ into eq. (\ref{inverseZ}) gives%
\begin{equation}
Q\left(  E;3\right)  =\frac{1}{72}\left[  6E^{2}-36E+16\cos\left(  \frac
{2E\pi}{3}\right)  +9\left(  -1\right)  ^{E}+47\right]  .
\end{equation}
The generating function of $P_{2}\left(  E;3\right)  $ by eq. (\ref{muhanshu})
reads%
\begin{equation}
\sum_{E}P_{2}\left(  E;3\right)  z^{E}=\frac{\left[  z\left(  z-1\right)
-1\right]  z^{4}}{\left(  z-1\right)  ^{3}\left(  z+1\right)  \left(
1+z+z^{2}\right)  }. \label{32}%
\end{equation}
Substituting eq. (\ref{32})\ into eq. (\ref{inverseZ}) gives%
\begin{equation}
P_{2}\left(  E;3\right)  =\frac{1}{72}\left[  6E^{2}-32\cos\left(  \frac{2}%
{3}E\pi\right)  -9\left(  -1\right)  ^{E}-31\right]  .
\end{equation}

$N=4$. The generating function of $Q\left(  E;4\right)  $ by eq.
(\ref{muhanshu2}) reads%
\begin{equation}
\sum_{E}Q\left(  E;4\right)  z^{E}=\frac{z^{10}}{\left(  z-1\right)
^{4}\left(  1+z\right)  ^{2}\left(  1+z^{2}\right)  \left(  1+z+z^{2}\right)
}. \label{41}%
\end{equation}
Substituting eq. (\ref{41})\ into eq. (\ref{inverseZ}) gives%
\begin{align}
Q\left(  E;4\right)   &  =\frac{1}{288}\left[  2E^{3}-30E^{2}+135E+9E\left(
-1\right)  ^{E}-45\left(  -1\right)  ^{E}\right. \nonumber\\
&  \left.  -18\left(  -i\right)  ^{E}-18i^{E}-32U_{E}\left(  -\frac{1}%
{2}\right)  -175\right]  , \label{111}%
\end{align}
where $U_{E}\left(  x\right)  $ is the Chebyshev polynomials of second kind
\cite{dette1995note}.\textbf{ }The generating function of $P_{2}\left(
E;4\right)  $ by eq. (\ref{muhanshu}) reads%
\begin{equation}
\sum_{E}P_{2}\left(  E;4\right)  z^{E}=\frac{\left(  1+z^{2}-z^{4}\right)
z^{6}}{\left(  z-1\right)  ^{4}\left(  z+1\right)  ^{2}\left(  1+z^{2}\right)
\left(  1+z+z^{2}\right)  }. \label{422}%
\end{equation}
Substituting eq. (\ref{422})\ into eq. (\ref{inverseZ}) gives%
\begin{align}
P_{2}\left(  E;4\right)   &  =\frac{1}{288}\left[  2E^{3}+6E^{2}%
-105E+9E\left(  -1\right)  ^{E}+9\left(  -1\right)  ^{E}\right. \nonumber\\
&  \left.  +18\left(  -i\right)  ^{E}+18i^{E}+64U_{E}\left(  -\frac{1}%
{2}\right)  +179\right]  .
\end{align}
\qquad The generating function of $P_{3}\left(  E;4\right)  $ by eq.
(\ref{muhanshu}) reads%
\begin{equation}
\sum_{E}P_{3}\left(  E;4\right)  z^{E}=\frac{\left(  1+z-z^{3}-z^{4}%
+z^{5}\right)  z^{5}}{\left(  z-1\right)  ^{4}\left(  z+1\right)  ^{2}\left(
1+z^{2}\right)  \left(  1+z+z^{2}\right)  }. \label{433}%
\end{equation}
Substituting eq. (\ref{433})\ into eq. (\ref{inverseZ}) gives%
\begin{align}
P_{3}\left(  E;4\right)   &  =\frac{1}{288}\left[  2E^{3}+6E^{2}-9E+9\left(
-1\right)  ^{E}E\right. \nonumber\\
&  \left.  -63\left(  -1\right)  ^{E}-54\left(  -i\right)  ^{E}-54i^{E}%
-32U_{E}\left(  -\frac{1}{2}\right)  -85\right]  .
\end{align}

$N=5$. The generating function of $Q\left(  E;5\right)  $ by eq.
(\ref{muhanshu2}) reads%
\begin{equation}
\sum_{E}Q\left(  E;5\right)  z^{E}=-\frac{z^{15}}{\left(  z-1\right)
^{5}\left(  1+z\right)  ^{2}\left(  1+z^{2}\right)  \left(  1+z+z^{2}\right)
\left(  1+z+z^{2}+z^{3}+z^{4}\right)  }. \label{51}%
\end{equation}
Substituting eq. (\ref{51})\ into eq. (\ref{inverseZ}) gives%
\begin{align}
P_{1}\left(  E;5\right)   &  =\frac{1}{86400}\left[  30E^{4}-900E^{3}%
+9300E^{2}-38250E+6912\cos\left(  \frac{2}{5}E\pi\right)  \right. \nonumber\\
&  +5400\cos\left(  \frac{E}{2}\pi\right)  +6400\cos\left(  \frac{2}{3}%
E\pi\right)  +6912\cos\left(  \frac{4}{5}E\pi\right)  +10125\cos\left(
E\pi\right) \nonumber\\
&  \left.  -1350\cos\left(  E\pi\right)  E-5400\sin\left(  \frac{E}{2}%
\pi\right)  +10125i\sin\left(  E\pi\right)  -1350i\sin\left(  E\pi\right)
E+50651\right]  . \label{333}%
\end{align}
The generating function of $P_{2}\left(  E;5\right)  $ by eq. (\ref{muhanshu})
reads%
\begin{equation}
\sum_{E}P_{2}\left(  E;5\right)  z^{E}=\frac{\left(  -1-z+z^{5}\right)  z^{9}%
}{\left(  z-1\right)  ^{5}\left(  z+1\right)  ^{2}\left(  z^{2}+1\right)
\left(  z^{2}+z+1\right)  \left(  1+z+z^{2}+z^{3}+z^{4}\right)  }. \label{52}%
\end{equation}
Substituting eq. (\ref{52})\ into eq. (\ref{inverseZ}) gives%
\begin{align}
P_{2}\left(  E;5\right)   &  =\frac{1}{86400}\left[  30E^{4}+300E^{3}%
-6900E^{2}+26550E-1350\left(  -1\right)  ^{E}E\right. \nonumber\\
&  +7425\left(  -1\right)  ^{E}+6912\cos\left(  \frac{2}{5}E\pi\right)
-5400\cos\left(  \frac{E}{2}\pi\right)  +6400\cos\left(  \frac{2}{3}%
E\pi\right) \nonumber\\
&  \left.  +6912\cos\left(  \frac{4}{5}E\pi\right)  +5400\sin\left(  \frac
{E}{2}\pi\right)  +6400\sqrt{3}\sin\left(  \frac{2}{3}E\pi\right)
-22249\right]  .
\end{align}
The generating function for $P_{3}\left(  E;5\right)  $, eq. (\ref{muhanshu}),
reads%
\begin{equation}
\sum_{E}P_{3}\left(  E;5\right)  z^{E}=\frac{\left(  -1-z+z^{5}+z^{6}%
-z^{8}\right)  z^{7}}{\left(  z-1\right)  ^{5}\left(  z+1\right)  ^{2}\left(
z^{2}+1\right)  \left(  z^{2}+z+1\right)  \left(  1+z+z^{2}+z^{3}%
+z^{4}\right)  }. \label{53}%
\end{equation}
Substituting eq. (\ref{53})\ into eq. (\ref{inverseZ}) gives%
\begin{align}
P_{3}\left(  E;5\right)   &  =\frac{1}{86400}\left[  30E^{4}+300E^{3}%
+300E^{2}-23850E+6912\cos\left(  \frac{2\pi}{5}E\right)  +16200\cos\left(
\frac{\pi}{2}E\right)  \right. \nonumber\\
&  -3200\cos\left(  \frac{2\pi}{3}E\right)  +6912\cos\left(  \frac{4}{5}%
E\pi\right)  +7425\cos(\pi E)-1350\cos\left(  \pi E\right)  E\nonumber\\
&  \left.  -16200\sin\left(  \frac{\pi}{2}E\right)  -3200\sqrt{3}\sin\left(
\frac{2\pi}{3}E\right)  +7425i\sin\left(  \pi E\right)  -1350iE\sin\left(  \pi
E\right)  +52151\right]  . \label{222}%
\end{align}
The generating function of $P_{4}\left(  E;5\right)  $ by eq. (\ref{muhanshu})
reads%
\begin{equation}
\sum_{E}P_{4}\left(  E;5\right)  z^{E}=\frac{\left(  -1-z+2z^{4}-z^{7}%
-z^{8}+z^{9}\right)  z^{6}}{\left(  z-1\right)  ^{5}\left(  z+1\right)
^{2}\left(  z^{2}+1\right)  \left(  z^{2}+z+1\right)  \left(  1+z+z^{2}%
+z^{3}+z^{4}\right)  }. \label{54}%
\end{equation}
Substituting eq. (\ref{54})\ into eq. (\ref{inverseZ}) gives%
\begin{align}
P_{4}\left(  E;5\right)   &  =\frac{1}{86400}\left[  30E^{4}+300E^{3}%
+300E^{2}-2250E+5400\sin\left(  \frac{\pi}{2}E\right)  -3200\sqrt{3}%
\sin\left(  \frac{2\pi}{3}E\right)  \right. \nonumber\\
&  -675i\left(  2E+5\right)  \sin\left(  \pi E\right)  -5400\cos\left(
\frac{\pi}{2}E\right)  -3200\cos\left(  \frac{2\pi}{3}E\right) \nonumber\\
&  -27648\cos\left(  \frac{2\pi}{5}E\right)  \left.  -27648\cos\left(
\frac{4\pi}{5}E\right)  -675\left(  2E+5\right)  \cos\left(  \pi E\right)
-19129\right]  .
\end{align}

Moreover, we also calculate the expression of $P\left(  E,N\right)  $ by the
generating function. In Ref.
\cite{andrews1998theory,andrews2004integer,gupta1970partitions}, the
expression of $P\left(  E,N\right)  $ is obtained by the recursive method.

The generating function of $P\left(  E;2\right)  $\ by eq. (\ref{muhanshu1})
reads
\begin{equation}
\sum_{E}P\left(  E;2\right)  z^{E}=\frac{z^{2}}{\left(  z-1\right)
^{2}\left(  1+z\right)  }. \label{22}%
\end{equation}
Substituting eq. (\ref{22})\ into eq. (\ref{inverseZ}) gives%
\begin{equation}
P\left(  E;2\right)  =\frac{E}{2}+\frac{1}{4}\left(  -1\right)  ^{E}-\frac
{1}{4}. \label{P(E;2)}%
\end{equation}
The generating function of $P\left(  E;3\right)  $ by eq. (\ref{muhanshu1})
reads%
\begin{equation}
\sum_{E}P\left(  E;3\right)  z^{E}=-\frac{z^{3}}{\left(  z-1\right)
^{3}\left(  1+z\right)  \left(  1+z+z^{2}\right)  }. \label{33}%
\end{equation}
Substituting eq. (\ref{33})\ into eq. (\ref{inverseZ}) gives%
\begin{equation}
P\left(  E;3\right)  =\frac{1}{72}\left[  6E^{2}+16\cos\left(  \frac{2E\pi}%
{3}\right)  -9\left(  -1\right)  ^{E}-7\right]  . \label{P(E;3)}%
\end{equation}
The generating function of $P\left(  E;4\right)  $ by eq. (\ref{muhanshu1})
reads%
\begin{equation}
\sum_{E}P\left(  E;4\right)  z^{E}=\frac{z^{4}}{\left(  z-1\right)
^{4}\left(  1+z\right)  ^{2}\left(  1+z^{2}\right)  \left(  1+z+z^{2}\right)
}. \label{44}%
\end{equation}
Substituting eq. (\ref{44})\ into eq. (\ref{inverseZ}) gives%
\begin{align}
P\left(  E;4\right)   &  =\frac{1}{288}\left[  2E^{3}+6E^{2}-9E+9\left(
-1\right)  ^{E}E\right. \nonumber\\
&  \left.  +9\left(  -1\right)  ^{E}+18\left(  -i\right)  ^{E}+18i^{E}%
-32U_{E}\left(  -\frac{1}{2}\right)  -13\right]  . \label{P(E;4)}%
\end{align}
The generating function of $P\left(  E;5\right)  $ by eq. (\ref{muhanshu})
reads%
\begin{equation}
\sum_{E}P\left(  E;5\right)  z^{E}=-\frac{z^{5}}{\left(  z-1\right)
^{5}\left(  1+z\right)  ^{2}\left(  1+z^{2}\right)  \left(  1+z+z^{2}\right)
\left(  1+z+z^{2}+z^{3}+z^{4}\right)  }. \label{55}%
\end{equation}
Substituting eq. (\ref{55})\ into eq. (\ref{inverseZ}) gives%
\begin{align}
P\left(  E;5\right)   &  =\frac{1}{86400}\left[  30E^{4}+300E^{3}%
+300E^{2}-2250E-1350E\left(  -1\right)  ^{E}+3375\left(  -1\right)
^{E}\right. \nonumber\\
&  +5400\sin\left(  \frac{E}{2}\pi\right)  -3200\sqrt{3}\sin\left(  \frac
{2}{3}E\pi\right)  -5400\cos\left(  \frac{E}{2}\pi\right) \nonumber\\
&  \left.  -3200\cos\left(  \frac{2}{3}E\pi\right)  +6912\cos\left(  \frac
{2}{5}E\pi\right)  +6912\cos\left(  \frac{4}{5}E\pi\right)  -1849\right]  .
\label{P(E;5)}%
\end{align}

It is worthy to note that here, though there are imaginary units in the
expressions, the result are real. To illustrate this, we list some explicit
results of $P\left(  E;N\right)  $, $Q\left(  E;N\right)  $, and $P_{q}\left(
E;N\right)  $ for different $E$ in table \ref{table1}.

\begin{table}[ptb]
\caption{Some explicit results of $P\left(  E;N\right)  $, $Q\left(
E;N\right)  $, and $P_{q}\left(  E;N\right)  $ for different $E$.}%
\label{table1}%
\centering
\begin{tabular}
[c]{ccccccccccc}\hline
$E$ & $10$ & $20$ & $30$ & $40$ & $50$ & $60$ & $70$ & $80$ & $90$ &
$100$\\\hline
$Q\left(  E;2\right)  $ & $4$ & $9$ & $14$ & $19$ & $24$ & $29$ & $34$ & $39$
& $44$ & $49$\\
$P\left(  E;2\right)  $ & $5$ & $10$ & $15$ & $20$ & $25$ & $30$ & $35$ & $40$
& $45$ & $50$\\
$Q\left(  E;3\right)  $ & $4$ & $24$ & $61$ & $114$ & $184$ & $271$ & $374$ &
$494$ & $631$ & $784$\\
$P_{2}\left(  E;3\right)  $ & $8$ & $33$ & $74$ & $133$ & $208$ & $299$ &
$408$ & $533$ & $674$ & $833$\\
$P\left(  E;3\right)  $ & $8$ & $33$ & $75$ & $133$ & $208$ & $300$ & $408$ &
$533$ & $675$ & $833$\\
$Q\left(  E;4\right)  $ & $1$ & $23$ & $108$ & $297$ & $632$ & $1154$ & $1906$
& $2928$ & $4263$ & $5952$\\
$P_{2}\left(  E;4\right)  $ & $6$ & $58$ & $197$ & $465$ & $904$ & $1556$ &
$2461$ & $3663$ & $5202$ & $7120$\\
$P_{3}\left(  E;4\right)  $ & $9$ & $63$ & $206$ & $477$ & $920$ & $1574$ &
$2484$ & $3688$ & $5231$ & $7152$\\
$P\left(  E;4\right)  $ & $9$ & $64$ & $206$ & $487$ & $920$ & $1575$ & $2484$
& $3689$ & $5231$ & $7153$\\
$Q\left(  E;5\right)  $ & $0$ & $7$ & $84$ & $377$ & $1115$ & $2611$ & $5260$
& $9542$ & $16019$ & $25337$\\
$P_{2}\left(  E;5\right)  $ & $2$ & $57$ & $312$ & $995$ & $2419$ & $4980$ &
$9157$ & $15512$ & $24692$ & $37425$\\
$P_{3}\left(  E;5\right)  $ & $5$ & $80$ & $370$ & $1106$ & $2599$ & $5246$ &
$9525$ & $16000$ & $25315$ & $38201$\\
$P_{4}\left(  E;5\right)  $ & $6$ & $83$ & $376$ & $1114$ & $2610$ & $5259$ &
$9541$ & $16018$ & $25336$ & $38224$\\
$P\left(  E;5\right)  $ & $7 $ & $84$ & $377$ & $1115$ & $2611$ & $5260$ &
$9542$ & $16019$ & $25337$ & $38225$\\\hline
\end{tabular}
\end{table}

The results of $P\left(  E,N\right)  $ given by eqs. (\ref{P(E;2)}),
(\ref{P(E;3)}), (\ref{P(E;4)}), and (\ref{P(E;5)}) coincide\textbf{ }with the
results obtained by the recursive method
\cite{andrews1998theory,andrews2004integer,gupta1970partitions}.

\section{Conclusions \label{con}}

In this paper, starting from the canonical partition function of various kinds
of quantum ideal gases, we obtain the generating function of some restricted
integer partition functions that count the number of integer partitions with
length $N$. We calculate the exact expression of restricted integer partition
functions from the corresponding generating functions which are constructed by
resorting to statistical mechanics.

We introduce a new type of restricted integer partition functions. The
restricted integer partition functions introduced in the present paper
corresponds to general statistics which is a generalization of Gentile
statistics proposed in Ref. \cite{dai2009exactly}. Many kinds of integer
partition functions are special cases of this restricted integer partition
function. This allows us to consider a number of restricted partition
functions in a unified framework.

We also obtain a relation between the integer partition function and the
symmetric function. Concretely, we show that the generating function of the
restricted integer partition functions corresponding to ideal Bose, Fermi, and
Gentile gases are symmetric functions and can be expressed as linear
combinations of the $S$-function which is an important class of the symmetric function.

We also provides some expressions of restricted integer partition functions,
by use of the approach suggested in the paper, as examples.

The generating function of the restricted integer partition functions obtained
in the present paper is obtained from canonical partition functions which is
from the canonical ensemble. It is in principle possible to calculate various
restricted integer partition functions from other statistical ensembles. For
example, the thermodynamic quantities calculated in\ the canonical ensemble
can also be obtained in the grand canonical ensemble to some extent
\cite{dai2017explicit,leonard1968exact}. Furthermore, the canonical partition
function is nothing but a kind of spectral functions which are functions of
the eigenvalue of a system and various spectral functions can be achieved from
each other \cite{dai2009number,vassilevich2003heat,dai2010approach}. This in
principle allows us to start with other statistical ensembles and
thermodynamic quantities. All the information of a thermodynamic system is
embodied in the corresponding mechanical system
\cite{hoyuelos2018creation,dai2012calculating}, which inspires us to find
relations between partitions and mechanics. Moreover, the result provided in
the present paper may be useful in the additive and diophantine problems
\cite{hardy1916some,schmidt1991diophantine,duverney2010number,mitrinovic1996handbook}%

\appendix
\titleformat{\section}{\large\bfseries}{\textbf{Appendix \thesection}}{1em}{}
\titlecontents{section}[1.08em]{\vspace{.5\baselineskip}}%
             { \textbf{Appendix} \textbf{\thecontentslabel} \qquad}{} %
             {\hspace{.5em}\titlerule*[10pt]{$\cdot$}\contentspage}

\section{The expression of the generating function of $P^{\left\{
\varepsilon\right\}  }\left(  E;N\right)  $, $Q^{\left\{  \varepsilon\right\}
}\left(  E;N\right)  $, and $P_{q}^{\left\{  \varepsilon\right\}  }\left(
E;N\right)  $ \label{QIq}}

In this appendix, we express the generating function of $P^{\left\{
\varepsilon\right\}  }\left(  E;N\right)  $ and $Q^{\left\{  \varepsilon
\right\}  }\left(  E;N\right)  $ in terms of the determinant of certain
matrices and list the explicit expressions of the generating function of
$P_{q}^{\left\{  \varepsilon\right\}  }\left(  E;N\right)  $ for $N=3$, $4$,
$5$, and $6$. The details of the calculation can be found in Ref.
\cite{zhou2017canonical}.

\subsection{The matrix form\ of expressions of $P^{\left\{  \varepsilon
\right\}  }\left(  E;N\right)  $ and $Q^{\left\{  \varepsilon\right\}
}\left(  E;N\right)  $}

The generating functions of $P^{\left\{  \varepsilon\right\}  }\left(
E;N\right)  $ and $Q^{\left\{  \varepsilon\right\}  }\left(  E;N\right)  $ are
given in eqs. (\ref{muhanshu1}) and (\ref{muhanshu}) in terms of
$S$-functions. However, the $S$-functions $\left(  N\right)  \left(
x_{1},x_{2},\ldots\right)  $ and $\left(  1^{N}\right)  \left(  x_{1}%
,x_{2},\ldots\right)  $ can be represented as the determinant of a matrix
\cite{littlewood1977theory,macdonald1998symmetric}. In this section, we
express the generating function of $P^{\left\{  \varepsilon\right\}  }\left(
E;N\right)  $ and $Q^{\left\{  \varepsilon\right\}  }\left(  E;N\right)  $ in
matrices forms.
\begin{equation}
\sum_{E}Q^{\left\{  \varepsilon\right\}  }\left(  E,N\right)  z^{E}=\frac
{1}{N!}\det\left(
\begin{array}
[c]{ccccc}%
p\left(  z\right)  & 1 & 0 & \ldots & 0\\
p\left(  z^{2}\right)  & p\left(  z\right)  & 2 & \ldots & 0\\
p\left(  z^{3}\right)  & p\left(  z^{2}\right)  & p\left(  z\right)  & \ldots
& 0\\
\ldots & \ldots & \ldots & \ldots & N-1\\
p\left(  z^{N}\right)  & p\left(  z^{N-1}\right)  & \ldots & \ldots & p\left(
z\right)
\end{array}
\right)  ,
\end{equation}
where $p\left(  z\right)  =\sum_{\varepsilon\in\left\{  \varepsilon\right\}
}z^{\varepsilon}$.%
\begin{equation}
\sum_{E}P^{\left\{  \varepsilon\right\}  }\left(  E,N\right)  z^{E}=\frac
{1}{N!}\det\left(
\begin{array}
[c]{ccccc}%
p\left(  z\right)  & -1 & 0 & \ldots & 0\\
p\left(  z^{2}\right)  & p\left(  z\right)  & -2 & \ldots & 0\\
p\left(  z^{3}\right)  & p\left(  z^{2}\right)  & p\left(  z\right)  & \ldots
& 0\\
\ldots & \ldots & \ldots & \ldots & -\left(  N-1\right) \\
p\left(  z^{N}\right)  & p\left(  z^{N-1}\right)  & \ldots & \ldots & p\left(
z\right)
\end{array}
\right)  .
\end{equation}

\subsection{The explicit expression of $P_{q}^{\left\{  \varepsilon\right\}
}\left(  E;N\right)  $ for $N=3$, $4$, $5$, and $6$}

In this section, we list the explicit expressions of $P_{q}^{\left\{
\varepsilon\right\}  }\left(  E;N\right)  $ for $N=3$, $4$, $5$, and $6$. The
details of the calculation can be found in Ref. \cite{zhou2017canonical}.%

\begin{equation}
\sum_{E}P_{2}^{\left\{  \varepsilon\right\}  }\left(  E,3\right)  z^{E}%
=\frac{1}{6}p\left(  z\right)  ^{3}+\frac{1}{2}p\left(  z\right)  p\left(
z^{2}\right)  -\frac{2}{3}p\left(  z^{3}\right)  .
\end{equation}

\begin{equation}
\sum_{E}P_{2}^{\left\{  \varepsilon\right\}  }\left(  E,4\right)  z^{E}%
=\frac{1}{24}p\left(  z\right)  ^{4}+\frac{1}{4}p\left(  z\right)
^{2}p\left(  z^{2}\right)  +\frac{1}{8}p\left(  z^{2}\right)  ^{2}-\frac{2}%
{3}p\left(  z^{3}\right)  p\left(  z\right)  +\frac{1}{4}p\left(
z^{4}\right)  .
\end{equation}
\qquad%
\begin{equation}
\sum_{E}P_{3}^{\left\{  \varepsilon\right\}  }\left(  E,4\right)  z^{E}%
=\frac{1}{24}p\left(  z\right)  ^{4}+\frac{1}{4}p\left(  z\right)
^{2}p\left(  z^{2}\right)  +\frac{1}{8}p\left(  z^{2}\right)  ^{2}+\frac{1}%
{3}p\left(  z^{3}\right)  p\left(  z\right)  -\frac{3}{4}p\left(
z^{4}\right)  .
\end{equation}%
\begin{align}
\sum_{E}P_{2}^{\left\{  \varepsilon\right\}  }\left(  E,5\right)  z^{E}  &
=\frac{1}{120}p\left(  z\right)  ^{5}+\frac{1}{12}p\left(  z\right)
^{3}p\left(  z^{2}\right)  +\frac{1}{8}p\left(  z\right)  p\left(
z^{2}\right)  ^{2}\nonumber\\
&  -\frac{1}{3}p\left(  z\right)  ^{2}p\left(  z^{3}\right)  -\frac{1}%
{3}p\left(  z^{2}\right)  p\left(  z^{3}\right)  +\frac{1}{4}p\left(
z\right)  p\left(  z^{4}\right)  +\frac{1}{5}p\left(  z^{5}\right)  .
\end{align}%
\begin{align}
\sum_{E}P_{3}^{\left\{  \varepsilon\right\}  }\left(  E,5\right)  z^{E}  &
=\frac{1}{120}p\left(  z\right)  ^{5}+\frac{1}{12}p\left(  z\right)
^{3}p\left(  z^{2}\right)  +\frac{1}{8}p\left(  z\right)  p\left(
z^{2}\right)  ^{2}\nonumber\\
&  +\frac{1}{6}p\left(  z\right)  ^{2}p\left(  z^{3}\right)  +\frac{1}%
{6}p\left(  z^{2}\right)  p\left(  z^{3}\right)  -\frac{3}{4}p\left(
z\right)  p\left(  z^{4}\right)  +\frac{1}{5}p\left(  z^{5}\right)  .
\end{align}%
\begin{align}
\sum_{E}P_{4}^{\left\{  \varepsilon\right\}  }\left(  E,5\right)  z^{E}  &
=\frac{1}{120}p\left(  z\right)  ^{5}+\frac{1}{12}p\left(  z\right)
^{3}p\left(  z^{2}\right)  +\frac{1}{8}p\left(  z\right)  p\left(
z^{2}\right)  ^{2}\nonumber\\
&  +\frac{1}{6}p\left(  z\right)  ^{2}p\left(  z^{3}\right)  +\frac{1}%
{6}p\left(  z^{2}\right)  p\left(  z^{3}\right)  +\frac{1}{4}p\left(
z\right)  p\left(  z^{4}\right)  -\frac{4}{5}p\left(  z^{5}\right)  .
\end{align}

\begin{align}
\sum_{E}P_{2}^{\left\{  \varepsilon\right\}  }\left(  E,6\right)  z^{E}  &
=\frac{1}{6!}p\left(  z\right)  ^{6}+\frac{1}{48}p\left(  z\right)
^{4}p\left(  z^{2}\right)  +\frac{1}{16}p\left(  z\right)  ^{2}p\left(
z^{2}\right)  ^{2}+\frac{1}{48}p\left(  z^{2}\right)  ^{3}\nonumber\\
&  -\frac{1}{9}p\left(  z\right)  ^{3}p\left(  z^{3}\right)  -\frac{1}%
{3}p\left(  z\right)  p\left(  z^{2}\right)  p\left(  z^{3}\right)  +\frac
{2}{9}p\left(  z^{3}\right)  ^{2}\nonumber\\
&  +\frac{1}{8}p\left(  z\right)  ^{2}p\left(  z^{4}\right)  +\frac{1}%
{8}p\left(  z^{2}\right)  p\left(  z^{4}\right)  +\frac{1}{5}p\left(
z^{5}\right)  p\left(  z\right)  -\frac{1}{3}p\left(  z^{6}\right)  ,
\label{i}%
\end{align}%
\begin{align}
\sum_{E}P_{3}^{\left\{  \varepsilon\right\}  }\left(  E,6\right)  z^{E}  &
=\frac{1}{6!}p\left(  z\right)  ^{6}+\frac{1}{48}p\left(  z\right)
^{4}p\left(  z^{2}\right)  +\frac{1}{16}p\left(  z\right)  ^{2}p\left(
z^{2}\right)  ^{2}+\frac{1}{48}p\left(  z^{2}\right)  ^{3}\nonumber\\
&  +\frac{1}{18}p\left(  z\right)  ^{3}p\left(  z^{3}\right)  +\frac{1}%
{6}p\left(  z\right)  p\left(  z^{2}\right)  p\left(  z^{3}\right)  +\frac
{1}{18}p\left(  z^{3}\right)  ^{2}\nonumber\\
&  -\frac{3}{8}p\left(  z\right)  ^{2}p\left(  z^{4}\right)  -\frac{3}%
{8}p\left(  z^{2}\right)  p\left(  z^{4}\right)  +\frac{1}{5}p\left(
z^{5}\right)  p\left(  z\right)  +\frac{1}{6}p\left(  z^{6}\right)  ,
\label{j}%
\end{align}%
\begin{align}
\sum_{E}P_{4}^{\left\{  \varepsilon\right\}  }\left(  E,6\right)  z^{E}  &
=\frac{1}{6!}p\left(  z\right)  ^{6}+\frac{1}{48}p\left(  z\right)
^{4}p\left(  z^{2}\right)  +\frac{1}{16}p\left(  z\right)  ^{2}p\left(
z^{2}\right)  ^{2}+\frac{1}{48}p\left(  z^{2}\right)  ^{3}\nonumber\\
&  +\frac{1}{18}p\left(  z\right)  ^{3}p\left(  z^{3}\right)  +\frac{1}%
{6}p\left(  z\right)  p\left(  z^{2}\right)  p\left(  z^{3}\right)  +\frac
{1}{18}p\left(  z^{3}\right)  ^{2}\nonumber\\
&  +\frac{1}{8}p\left(  z\right)  ^{2}p\left(  z^{4}\right)  +\frac{1}%
{8}p\left(  z^{2}\right)  p\left(  z^{4}\right)  -\frac{4}{5}p\left(
z^{5}\right)  p\left(  z\right)  +\frac{1}{6}p\left(  z^{6}\right)  ,
\label{k}%
\end{align}%
\begin{align}
\sum_{E}P_{5}^{\left\{  \varepsilon\right\}  }\left(  E,6\right)  z^{E}  &
=\frac{1}{6!}p\left(  z\right)  ^{6}+\frac{1}{48}p\left(  z\right)
^{4}p\left(  z^{2}\right)  +\frac{1}{16}p\left(  z\right)  ^{2}p\left(
z^{2}\right)  ^{2}+\frac{1}{48}p\left(  z^{2}\right)  ^{3}\nonumber\\
&  +\frac{1}{18}p\left(  z\right)  ^{3}p\left(  z^{3}\right)  +\frac{1}%
{6}p\left(  z\right)  p\left(  z^{2}\right)  p\left(  z^{3}\right)  +\frac
{1}{18}p\left(  z^{3}\right)  ^{2}\nonumber\\
&  +\frac{1}{8}p\left(  z\right)  ^{2}p\left(  z^{4}\right)  +\frac{1}%
{8}p\left(  z^{2}\right)  p\left(  z^{4}\right)  +\frac{1}{5}p\left(
z^{5}\right)  p\left(  z\right)  -\frac{5}{6}p\left(  z^{6}\right)  .
\label{l}%
\end{align}


\acknowledgments

We are very indebted to Dr G. Zeitrauman for his encouragement. This work is supported in part by NSF of China under Grant
No. 11575125 and No. 11675119.










\providecommand{\href}[2]{#2}\begingroup\raggedright\endgroup


\end{document}